\newenvironment{proof}[1][Proof]{\begin{trivlist}
\item[\hskip \labelsep {\bfseries #1}]}{\end{trivlist}}
\newtheorem{prop}{{Proposition}}
\title{\large \textsc{\bf{Communication Learning in Social Networks: \\Finite Population and the Rates}\footnote{We thank Daron Acemoglu, S$\acute{e}$bastien Bubeck, John Campbell, Emmanuel Farhi, Drew Fudenberg, Matthew Jackson, Gareth James, Philip Reny, Philippe Rigollet, Andrei Shleifer, Alp Simsek, and Yiqing Xing for valuable comments and suggestions. All errors are ours.}}}
\author{\normalsize \textbf{Jianqing Fan}\footnote{ \sloppy Department of Operations Research and Financial Engineering, Princeton University.} \\ \normalsize Princeton University \and \normalsize \textbf{Xin Tong}\footnote{ \sloppy Department of Mathematics, Massachusetts Institute of Technology.} \\ \normalsize MIT \and \normalsize \textbf{Yao Zeng}\footnote{ \sloppy Department of Economics, Harvard University.}\\ \normalsize Harvard University}
\date{}
\begin{document}
\maketitle

\begin{abstract}
Following the Bayesian communication learning paradigm, we propose a \textit{finite population learning} concept to capture the level of information aggregation in any given network, where agents are allowed to communicate with neighbors repeatedly before making a single decision.  This concept helps determine the occurrence of effective information aggregation in a finite network and reveals explicit interplays among parameters.  It also enables meaningful comparative statics regarding the effectiveness of information aggregation in networks.  Moreover, it offers a solid foundation to address, with a new perfect learning concept, long run dynamics of learning behavior and the associated learning rates as population diverges.   Our conditions for the occurrence of finite population learning and perfect learning in communication networks are very tractable and transparent.
\end{abstract}
\vspace*{0.2cm}

\small \textsc{Keywords:} Social networks, Bayesian update, communication, finite population learning, perfect learning, learning rates.
%
%
%


\newtheorem{ass}{Assumption}
\newtheorem{theo}{Theorem}
\newtheorem{lem}{\textsc{\sc{Lemma}}}
\newtheorem{defi}{Definition}
\newtheorem{coro}{\textsc{\sc{Corollary}}}
\newtheorem{remark}{\textsc{\sc{Remark}}}
\newtheorem{examp}{Example}

\newcommand{\info}{\mathrm{info}}
\newcommand{\erf}{\mathrm{erf}}
\newcommand{\SI}{\mathrm{SI}}
\newcommand{\EI}{\mathrm{EI}}
\newpage
\normalsize
\section{Introduction}

The effectiveness of information aggregation has been long and widely recognized as a central theme for good decision making at both individual and aggregate levels.  Boosted by the Internet and particularly online social networks, this theme is especially important in communication and decision making in the modern world. People communicate with their friends, through extremely efficient, open and multi-dimensional approaches, in social networks before making specific decisions.  In particular, these circumstances of information exchange often involve strategic interactions among people, which call for new modeling techniques beyond mainstream statistics and economics literature, such as the convolution of game theory and graphical models.

Recently, \cite{ABO2011} provide a fascinating model to study communication in social networks and the implications for information aggregation.  They employ a game-theoretic framework to model people's information aggregation in social networks.  They define an intuitive concept of asymptotic learning, which means as the population of a network diverges, the probability that a large fraction of people take ``correct" actions converges to one or eventually exceeds a high threshold. Given agents communicate either truthfully or strategically, they establish equilibrium conditions under which asymptotic learning occurs. They also discuss the welfare implications of asymptotic learning, and investigate the impacts of specific types of cost structures and social cliques.

Motivated by the asymptotic learning concept, we ask the following questions. Can we define a good communication learning concept regarding a finite population network?  If so, does such learning occur in a given finite social network? What are necessary and sufficient conditions to guarantee such learning? Can we write down clean and tractable rates at which a society achieves long run asymptotic learning?
These questions are relevant and important, because it is common practice for people to assess the effectiveness of information aggregation  in given organizations, regions or nations.
Such assessment regarding finite population networks naturally offers a solid foundation for people to understand the quality of social learning when the society evolves.  As of now, current researchers in social networks have not provided desirable answers to these questions, and previous works called for fresh inputs (\cite{G2009}; \cite{AO2010}; \cite{J2010}).

Based on an information exchange game in social networks modified from \cite{ABO2011}, we propose a \textit{finite population learning} concept, which captures the level of aggregation of disperse information in any given communication network.  In the model, there is an underlying state.  People in a social network do not know the underlying state, but they have a common prior on the distribution of the state. After receiving initial private signals related to the underlying state, they exchange information simultaneously in the network, at times specified by a homogenous Poisson process,  until taking an irreversible action to exit the network. Upon each person's exit, she makes an estimate of the underlying state.
Her payoff depends on the waiting time before making the decision and the expected mean-square error between her estimate and the underlying state.  The longer she waits, the more information she gathers and hence the better her estimate is, but the more discounting incurs. Thus, she needs to take a prompt action after obtaining sufficient amount of information in the network.

The newly defined finite population learning concept involves three parameters, $\epsilon$, $\bar{\epsilon}$, and $\delta$ for a given social network $G_n$ of population size $n$; rigorously, it is called $(\epsilon, \bar\epsilon, \delta)$-learning.  The parameter $\epsilon$ is the precision under which an agent's decision is considered ``correct", $1-\bar\epsilon$ represents the fraction of agents in the network who make the approximately correct decision, and $1-\delta$ represents the probability at which such a fraction of agents make the approximately correct decision.  We think of these three parameters as tolerance parameters of finite population learning.
To contrast with asymptotically driven concepts,  $(\epsilon, \bar\epsilon, \delta)$-learning is simply referred to as finite population learning in verbal discussions.

We derive necessary and sufficient conditions for the occurrence of finite population learning under any given equilibrium.  Intuitively, finite population learning is more likely to occur when the number of signals an agent obtains under equilibrium is larger, or the tolerances of learning are larger.
Interestingly, the impact of the information precisions on finite population learning is ambiguous, which parallels the well-known Hirshleifer effect and subsequent work on the social value of information but stems from a new and different mechanism.  We also provide necessary and sufficient conditions for the occurrence of finite population learning under \textit{any equilibrium}, namely, without knowledge of a particular equilibrium.


A straightforward advantage of our conditions is that these conditions lead to meaningful comparative statics regarding the effectiveness of information aggregation in networks.  In these conditions, the underlying forces, such as tolerances, information precisions and information-sensitiveness, that shape the effectiveness of information aggregation in a given finite communication network are explicitly displayed in a single formula.  Compared to the asymptotic learning results in previous literature, our conditions for finite population learning involve only one equilibrium outcome, which is the number of signals an agent obtains when she exits under equilibrium.  More importantly, different from our finite population learning concept in which the total amount of information is fixed, the existing asymptotic learning literature employs an implicit assumption that the total amount of information grows linearly with the population size.  Hence, the learning status with respect to a sequence of networks with growing population reflects not only the effectiveness of information aggregation of certain network structures, but also an increased endowment of total information. Our finite population learning concept overcomes this defect and disentangles the effectiveness of information aggregation from the growth of information endowment.


The finite population learning concept enables us to investigate the rate at which a sequence of growing communication networks $\{G_n\}_{n=1}^\infty$, which is referred to as a \emph{society}, reaches \emph{perfect learning}.  Perfect learning occurs if all communication networks in a society achieve finite population learning under vanishing tolerances as population grows. For example, 
we say $\delta$-perfect learning occurs along society $\{G_n\}_{n=1}^\infty$ if i). $(\epsilon, \bar\epsilon, \delta_n)$-learning occurs for each network $G_n$ in the society, and ii). $\delta_n$ goes to zero as $n$ goes to infinity. The learning rate is characterized by the sequence $\{\delta_n\}_{n=1}^\infty$.   Clearly, faster learning rate implies perfect learning is reached at a higher quality.
It is instructive to distinguish our learning rate concept from the speed of convergence to a pre-defined consensus in existing social learning literature, which mainly concerns about the time towards a consensus in a circumstance where people make repeated decisions and learn from others' previous decisions to help to make their own future decisions.  In such a context, the observable sequence of aggregate decisions naturally reveals the dynamics of information aggregation along the time dimension.  In our story of direct communication, however, although people communicate with each other repeatedly, they only make a single decision, and different people may go through varying communication rounds before their decisions.  This makes the time dynamics of information aggregation largely unobservable, and thus calls for alternative dimensions to look into the information dynamics.  

We have given conditions for societies to reach $\delta$-perfect learning at a certain desired rate $\{\delta_n\}_{n=1}^\infty$.
Given a sequence of networks and the associated equilibria, we define an \textit{equilibrium informed agent} as one who obtains an unbounded number of signals as the population size goes to infinity.  The $\delta$-perfect learning occurs if almost all agents in the society are equilibrium informed.
Moreover, without involving any equilibrium, we define a \textit{socially informed agent} (roughly) as one who has an unbounded number of neighbors in a finite distance as the population goes to infinity.  The $\delta$-perfect learning occurs if almost all agents are socially informed.
We also explicitly explore the achievable fastest learning rate for perfect learning in a given society.  Under some circumstances, achievable learning rate could be in the exponential order.  This implies that a society with growing population might achieve a desirable level of finite population learning very quickly.\\  

\textsc{Relation to Literature.} Our work lies in the category of Bayesian social learning in social networks, in which decision makers in a social network update their information according to the Bayes' rule.  
General Bayesian social learning is divided into two sub-categories, namely Bayesian observational learning and Bayesian communication learning. In Bayesian observational learning, agents observe past actions of their neighbors.  From these observed actions, agents update their beliefs and make inferences.  Herd behavior is a very typical consequence of observational learning. 
In literature, \cite{B1992}, \cite{BHW1992} and \cite{SS2000} are early attempts to model herd effects through Bayesian observational learning.  \cite{BF2004} and \cite{SS2008} relax the assumption of full observation network topology and study Bayesian observational learning with sampling of past actions.  Recently, \cite{ADLO2011} and \cite{M2011} investigate how detailed network structures could add new interesting insights.

Our work belongs to Bayesian communication learning, which means that agents cannot directly observe actions of others but can communicate with each other before making a decision.  Consequently, agents update their beliefs and make inferences based on the information given by others.  New interesting considerations arise in Bayesian communication learning; for example, agents may not want to truthfully reveal their information to others through communication. 
\cite{CS1982} pioneers the research in strategic communication, and \cite{ABO2011} is an interesting piece that looks into how communication learning shapes information aggregation in social networks.  Other works such as \cite{GGS2010} and \cite{HK2010} also study strategic communication in social networks, but their focus is not on information aggregation.

There is a branch of literature that applies various non-Bayesian updating methods to investigate information aggregation and social learning. \cite{D1974} develops a tractable non-Bayesian learning model which is frequently employed in research of social networks today. Essentially, the DeGroot model is pertaining to observational learning, in which agents make today's decisions by taking the average of neighbors' beliefs revealed in their decisions yesterday.  \cite{DVZ2003} and \cite{GJ2010, GJ2012b, GJ2012a,  GJ2011} apply the DeGroot model to financial networks and general social networks, respectively.  By a field experiment, \cite{MPS2010} compares a non-Bayesian model of communication with a model in which agents communicate their signals and update information based on Bayes' rule.  Their evidence is generally in favor of the Bayesian communication learning approach.  

Our paper is most related to \cite{ABO2011}.  Compared to their work, we employ a simplified framework for network communication and exploit more undeveloped mechanisms. In particular, we mainly focus on the effect of social learning and information aggregation in finite population communication networks. This allows for clear comparative statics with respect to learning, and for discussion on  the rates of learning as the population increases.  As of now, researchers have not provided desirable results in finite population communication network as well as results regarding learning rates. To the best of our knowledge, our work is the first attempt to address these questions with clear answers.

Our work is also related to \cite{GJ2012b, GJ2012a, GJ2011}, in particular on the investigation of learning rate. \cite{GJ2012b, GJ2012a, GJ2011} employ the DeGroot model to analyze the impacts of homophily in social networks, which refers to the tendency of agents to associate relatively more with those who are similar to them, on the learning rate in the context of observational learning.  Our results of learning rate are different from theirs in two aspects.  First, our focus is on Bayesian communication learning rather than non-Bayesian observational learning.  Second, as discussed before, our concept of learning rate is based on perfect learning as the population in networks diverges, rather than the time towards a consensus in their model.  An appealing feature of \cite{GJ2012b, GJ2012a, GJ2011} is that their results of learning rate are based on certain statistics of networks rather than the full network structures, which could lead to potentially more empirical traction.

We would also like to relate this work to social network papers in existing statistics literature.  The larger part of those papers are based on graphical models, which are ideal to describe structural formation.  Rather than providing a list of state-of-the-art contributions, we refer interested readers to   \cite{Newman2010} and \cite{Kolacazyk2009}, which might serve as a broad introduction to the field.   Our work supplements structural modeling with human behavior modeling through game theory. Such model enrichment is necessary for some specific objectives; for example, we will see that strategic interaction and contextual information that sit outside graphical models are crucial to determine the final information aggregation status.   Also, our theoretical results are in the same spirit of the finite sample results in the statistical learning theory, such as the Vapnik-Chervonenkis inequality.  Such results with a clear characterization of strategic interactions may have potential to expand the scope of the finite sample approach beyond statistical learning theory.

The rest of the paper is organized as follows.  Section 2 introduces the information exchange game and characterizes its equilibrium.  New finite population learning concept is proposed in Section 3.  Section 4 discusses dynamics of learning and addresses learning rates explicitly. In the final section, we discuss possible directions for further research.  All proofs are in the supplementary materials.

\section{The Model}
\setcounter{equation}{0}


In this section, we present our model of information exchange in social networks, which is closely related to \cite{ABO2011}, but has different focus.  In this model, people, formally called as agents, are organized in some network structure. Each agent has her initial information. Agents are able to solicit information from their neighbors through communication, restricted by the network structure and a communication clock. The communication clock defines the times at which each agent is able to communicate with others.  At each round of communication, agents are obliged to transmit truthfully \emph{all} information they have to their neighbors in the network.  By such communication, the information set of an agent can become larger as time evolves.  With the help of her initial and acquired information, every agent is able to make a decision and exit.  An exit strategy is needed due to the time value of information content.  After exit, an agent would not have any incentive to further acquire information from neighbors, but she is still obliged to transmit all her information to others in the next round of communication.  Through certain measure of agents' decisions, we are further able to characterize the quality of learning and information aggregation.

We make the following assumptions to simplify the analysis and focus on a concept of finite population learning, which will be rigorously defined in the next section.  First, we assume mandatory communication, which means that no agent holds her information to herself.  When communication times arrive, an agent has to send all her information set to all of her direct neighbors.  Second, we assume truthful communication, which means whenever an agent sends information, she has to send unmanipulated information, whether it is her own private information or obtained information originated from other agents.  
We will first analyze communication and information aggregation in a given finite population network, and then consider the limit as the population grows to infinity.  In this course, we assume that existing links are kept when a network grows.

Before formal definition of  the game, we would like to illustrate how information flows with an example.  For simplicity, suppose there are four agents in the network below. At time $t=0$, each agent $i$ has some private signal $s_i$, which captures her initial information. So the total information endowment in the system is $\{s_1, s_2, s_3, s_4\}$.   Communication occurs at $t=1, 2$. Due to the structure of the graph in our example, there is no need to consider beyond the second  communication round, since no additional information will be communicated further.

\begin{center}
\begin{tikzpicture}[->,>=stealth',shorten >=1pt,auto,node distance=2cm,
  thick,main node/.style={circle,fill=white!15,draw,font=\large\bfseries}]

  \node (1) at (0,0) [shape=circle, draw] {1} node at (1.5,0) {$I_1=\{s_1\}$} node at (-1.3,0) {$t=0:$};
  \node (2) at (-1,-1) [shape=circle, draw] {2};
  \node (3) at (1,-1) [shape=circle, draw] {3};
  \node (4) at (0,-2) [shape=circle, draw] {4};

  \path[every node/.style={font=\small}]
    (2) edge [bend left] (1)
    (3) edge [bend right] (1)
    (4) edge [bend right] (3);
\end{tikzpicture}
\end{center}

We will study two cases, and focus on agent $1$'s information set $I_1$. In the first case, suppose no agent exits after time $t=0$.  So the information flow is as follows:

\begin{center}
\begin{tikzpicture}[->,>=stealth',shorten >=1pt,auto,node distance=2.5cm,
  thick,main node/.style={circle,fill=white!15,draw,font=\large\bfseries}]

\node (1) at (0,0) [shape=circle, draw] {1} node at (2,0) {$I_1=\{s_1, s_2, s_3\}$} node at (-1.3,0) {$t=1:$};
  \node (2) at (-1,-1) [shape=circle, draw] {2};
  \node (3) at (1,-1) [shape=circle, draw] {3} node at (2.5, -1) {$I_3=\{s_3, s_4\}$};
  \node (4) at (0,-2) [shape=circle, draw] {4};

  \path[every node/.style={font=\small}]
    (2) edge [bend left] (1)
    (3) edge [bend right] (1)
    (4) edge [bend right] (3);

  \node (i2) at (-0.5,-0.5)  {$s_2$};
  \node (i3) at (0.5, -0.5) {$s_3$};
  \node (i4) at (0.5, -1.5) {$s_4$};

\node (5) at (7,0) [shape=circle, draw] {1} node at (9.3,0) {$I_1=\{s_1, s_2, s_3, s_4\}$} node at (5.7,0) {$t=2:$};
  \node (6) at (6,-1) [shape=circle, draw] {2};
  \node (7) at (8,-1) [shape=circle, draw] {3} node at (9.5, -1) {$I_3=\{s_3, s_4\}$};
  \node (8) at (7,-2) [shape=circle, draw] {4};

  \path[every node/.style={font=\small}]
    (6) edge [bend left] (5)
    (7) edge [bend right] (5)
    (8) edge [bend right] (7);

  \node (i5) at (7.5, -0.5) {$s_4$};
\end{tikzpicture}
\end{center}

After the first round of communication, i.e., $t=1$, agent $1$ has signals $\{s_1, s_2, s_3\}$.  Also note that at this time agent $3$ has $\{s_3, s_4\}$. At $t=2$, agent $3$ sends the newly grabbed signal $s_4$ to agent $1$. So agent $1$'s information set enriches to $\{s_1, s_2, s_3, s_4\}$.

In the second case, suppose agent $3$ exits after time $t=0$, then she is still obliged to send all her signals (in this case, only her private signal) she acquires to neighbors, but she does not have any incentive to receive others' signal.  Therefore, the information flow is as follows.

\begin{center}
\begin{tikzpicture}[->,>=stealth',shorten >=1pt,auto,node distance=2.5cm,
  thick,main node/.style={circle,fill=white!15,draw,font=\large\bfseries}]

\node (1) at (0,0) [shape=circle, draw] {1} node at (2,0) {$I_1=\{s_1, s_2, s_3\}$} node at (-1.3,0) {$t=1:$};
  \node (2) at (-1,-1) [shape=circle, draw] {2};
  \node (3) at (1,-1) [shape=circle, draw] {3} node at (2.5, -1) {$I_3=\{s_3\}$};
  \node (4) at (0,-2) [shape=circle, draw] {4};

  \path[every node/.style={font=\small}]
    (2) edge [bend left] (1)
    (3) edge [bend right] (1)
    (4) edge [bend right] (3);

  \node (i2) at (-0.5,-0.5)  {$s_2$};
  \node (i3) at (0.5, -0.5) {$s_3$};

\node (5) at (7,0) [shape=circle, draw] {1} node at (9,0) {$I_1=\{s_1, s_2, s_3\}$} node at (5.7,0) {$t=2:$};
  \node (6) at (6,-1) [shape=circle, draw] {2};
  \node (7) at (8,-1) [shape=circle, draw] {3} node at (9.5, -1) {$I_3=\{s_3\}$};
  \node (8) at (7,-2) [shape=circle, draw] {4};

  \path[every node/.style={font=\small}]
    (6) edge [bend left] (5)
    (7) edge [bend right] (5)
    (8) edge [bend right] (7);
\end{tikzpicture}
\end{center}

Note that as agent $3$ does not receive signal from agent $4$ at $t=1$, she does not have any new information to send to agent $1$ at the second communication round.  Therefore, agent $1$'s information set is still $\{s_1, s_2, s_3\}$ at $t=2$.  By contrasting the two cases in this toy example, we see that agents' decisions affect the information flow in the network.


Now we formally introduce the information exchange game.
Suppose we are interested in a social network with agents $\mathcal{N}^n=\{1,2,...,n\}$.
To model communication in the network, we organize these agents in a directed graph $G_n=(\mathcal{N}^n, \mathcal{E}^n)$, in which each node $i\in\mathcal{N}^n$ represents an agent.
We allow directed graphs to have multi-edges, so that two agents can communicate to each other.
An ordered pair $(j,i)\in\mathcal{E}^n$ means agent $j$ can send information to agent $i$ directly.
The goal of every agent is to estimate $\theta\in\mathbb{R}$, which represents an underlying state of the world.  Agents' knowledge of $\theta$ is captured by a normally distributed common prior $\theta\sim N(0, 1/\rho)$.
At time $t=0$, agent $i$ is endowed with her private signal $s_i = \theta + z_i$.  All $z_i\sim N(0, 1/{\bar\rho})$ are independent and they are also independent of $\theta$.
The distributions of $z_i$'s are common knowledge and so is the network architecture. Our results are not affected  if the means of $\theta$ and $z_i$ are changed to non-zero values.

In this network,  agents exchange their information as follows.
Suppose agents live in a world with continuous time $t\in[0,\infty)$.
Waiting induces a common exponential discount of the payoff with rate $r>0$. Instead of communicating at fixed times,
all agents communicate simultaneously at some points in time that follow a homogeneous Poisson process with rate $\lambda>0$, which is independent of $\theta$ and $z_i$.
This Poisson clock is also common knowledge.
After communication, agents update beliefs according to the Bayes' rule.
For example, the posterior distribution of $\theta$ on $k$ distinct signals is Gaussian with precision $\rho+k\bar\rho$.
So more private information, i.e., a higher $k$, will increase the precision and lead to a better estimate.
Hence, there is a natural trade-off between waiting to get more information and acting earlier to reduce the discount of information value, which makes an optimal stopping problem for each agent $i$.
We call the incentive to get more information \textit{information effect}, and the incentive to act earlier \textit{discount effect}.
In this course, at any given time $t$, each agent $i$ either makes an estimate $x_i$ of the fundamental state of the world $\theta$, or ``wait" for more information.
Just as illustrated in the four agents' example,  we assume that after agents make estimate and exit, they do not receive new information, but they continue to transmit information that they have already obtained when new rounds of communication take place.  

We introduce a few more notations to facilitate the discussion.
Let $I_{i,t}^n$ denote the information set of agent $i$ at time $t$.
We next specify the payoff structure and the optimization problem faced by agents.
Suppose agent $i$ takes action $x_i$ at time $t$ when the realization of the underlying state is $\theta$, then
her instantaneous payoff of taking an action $x_i$ is
$$u_i^n(x_i)= \psi - (x_i-\theta)^2\,,$$
where $\psi$ is a real-valued constant that captures the information sensitiveness of the decision problem, which we will elaborate later.  
At time $t$ with information set $I_{i,t}^n$, agent $i$'s optimal expected instantaneous payoff of taking an action before discounting is
$$U_{i,t}^n(I_{i,t}^n)=\max_{x_i}\mathbb{E}(u_i^n(x_i)|I_{i,t}^n)\,.$$
It is easy to see that agent $i$'s optimal estimate is $x_{i,t}^{n,*}=\mathbb{E}[\theta|I^{n}_{i,t}]$ if she decides to act at time $t$.
Thanks to the normality assumption of the fundamental $\theta$ and signals $\{s_i\}_{i=1}^n$, the optimal expected instantaneous payoff of agent $i$ taking an action after observing $k$ distinct signals can be calculated explicitly:
\begin{equation} \label{jf1}
\mathbb{E}[\psi - (x^{n,*}_{i,t}-\theta)^2|I^n_{i,t}]=\psi - \frac{1}{\rho+\bar\rho k}\,.
\end{equation}

At any time $t$ with information set $I_{i,t}^n$, before trying to make a best estimate and exit, agent $i$ has to make a decision about whether to exit. To facilitate the analysis, we first assume that any agent can obtain non-negative payoff upon her exit.  This assumption will be formally characterized after we define the equilibrium.  As a result, due to discount in time, each agent should make an estimate and exit precisely at a finite time, and especially, at a time instantaneously after communications take place.  Moreover, each agent would only get finite number of signals even if they waited forever, because there are in total $n$ signals
$\{s_i\}_{i=1}^n$ in the network.  Therefore, we actually only need to consider strategy profiles in which every agent exits at a finite communication round, rather than at any arbitrary time.
Denote by $l^n=(l^n_1, \ldots, l^n_n)$, where each $l^n_i$ is agent $i$'s communication round before exit.   Throughout the paper, we use $l_{-i}^n$ to denote $l^n$ without the component $l_i^n$.  Let $\tau_{k}$ be the physical time until $k$ rounds of communication.  Agent $i$'s payoff for choosing action $l_i^n$ is
$$
U_i^n(l^n_i, l^n_{-i})=\mathbb{E}\left\{ e^{-r\tau_{l^n_i}}\max_{x_i}\mathbb{E}[\psi - (x_i - \theta)^2|I_i^n(l^n)]\right\}\,,
$$
where $I_i^n(l^n)$ is agent $i$'s information set upon exit, which depends on other agents' exit strategies $l^n_{-i}$.  By (\ref{jf1}) and the exponential waiting time of the Poisson clock, we have
$$
U_i^n(l_i^n, l^n_{-i}) = \bar{r}^{l^n_i}\left(\psi - \frac{1}{\rho + \bar\rho k_i^{n, l^n}}\right),
$$
where $\bar{r} = \lambda/(\lambda+r)$ and
$k_i^{n,l^n}$ is the number of signals agent $i$ get upon exit if every agent acts according to $l^n$ in the network $G_n$.  With this reduction,  the following complete information static game will be considered.

\begin{defi} \label{game}
The information exchange game $\Gamma_{\info}(G_n)$ is a triple $\{\mathcal{N}^n, \mathcal{L}^n, \mathcal{U}^n\}$, in which\\
(a) $\mathcal{N}^n$ is the set of agents, i.e., $\mathcal{N}^n=\{1,2,...,n\}$;\\
(b) $\mathcal{L}^n$ is the collection of agents' strategy spaces.  For any agent $i\in \mathcal{N}^n$, her strategy space $L_i^n\in \mathcal{L}^n$ is a finite set
$$L_i^n=\{0,1,2,...,(L_i^n)_{max}\}\,,$$ where
$
(L_i^n)_{max}=\max_{j\in G_n} \{\text{length of shortest path from j to i}\}\,;
$\\
(c) $U^n_i\in\mathcal{U}^n$ is the payoff function for agent $i$:
\begin{equation} \label{payofffunction}
U_i^n(l^n_i, l^n_{-i})=\bar{r}^{l_i^n}\left(\psi-\frac{1}{\rho+\bar{\rho}k_i^{n, l^n}}\right)\,.
\end{equation}
\end{defi}

We consider pure-strategy Nash equilibria of this game.  As an agent's payoff gain from waiting is weakly larger (i.e., no smaller than) when other agents also wait more rounds, the information exchange game is a supermodular game.  The following result is a direct application of \cite{T1979}, which guarantees the existence of a pure-strategy Nash equilibrium in  supermodular games.

\begin{lem}
The information exchange game $\Gamma_{\info}(G_n)$ has at least one pure-strategy Nash equilibrium.
\end{lem}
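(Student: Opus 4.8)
The plan is to place $\Gamma_{\info}(G_n)$ within the framework of supermodular games and then invoke the existence theorem of \cite{T1979}. First I would record the order structure: each strategy space $L_i^n=\{0,1,\dots,(L_i^n)_{max}\}$ is a finite chain and hence a complete lattice, so the joint space $\mathcal{L}^n=\prod_{i}L_i^n$ is a complete lattice under the coordinatewise order. Since each $l_i^n$ is a scalar, supermodularity of $U_i^n$ in an agent's own strategy is vacuous, and the only substantive requirement is that $U_i^n$ have increasing differences in $(l_i^n,l_{-i}^n)$; that is, the gain from waiting one more round, $U_i^n(l_i^n+1,l_{-i}^n)-U_i^n(l_i^n,l_{-i}^n)$, must be nondecreasing in $l_{-i}^n$.

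To analyze this gain I would reduce everything to the signal count $k_i^{n,l^n}$, writing the payoff (\ref{payofffunction}) through $\phi(k):=\psi-1/(\rho+\bar\rho k)$, which is increasing and concave in $k$, together with the discount factor $\bar r^{\,l_i^n}\in(0,1)$. The key combinatorial observation is that a private signal $s_j$ reaches $i$ within $l_i^n$ rounds exactly when some directed path $j=v_0\to v_1\to\cdots\to v_m=i$ with $m\le l_i^n$ is active, meaning each relay $v_t$ has itself acquired $s_j$ by round $t$, which holds iff $l_{v_t}^n\ge t$. Thus $k_i^{n,l^n}$ is a sum of indicators of upward-closed threshold events and is nondecreasing in every coordinate of $l^n$, so $\phi(k_i^{n,l^n})$ rises whenever any agent waits longer. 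This is the monotonicity underlying the intuition that own waiting and others' waiting are complementary: an extra round for $i$ yields distant signals only insofar as the intermediate agents have waited long enough to hold and forward them.

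The step I expect to be the main obstacle is the increasing-differences inequality itself, because two effects work against it. The composition with the concave $\phi$ and the factor $\bar r<1$ both tend to shrink marginal gains at higher signal counts, and $k_i^{n,l^n}$ need not be supermodular across coordinates---a signal reachable through two alternative relays makes those relays substitutes. The plan is therefore to verify the inequality directly rather than through a generic composition rule: fixing $l_{-i}^{n\prime}\ge l_{-i}^n$ and setting $a=k_i^n(l_i^n,l_{-i}^n)$, $b=k_i^n(l_i^n+1,l_{-i}^n)$, $c=k_i^n(l_i^n,l_{-i}^{n\prime})$, $d=k_i^n(l_i^n+1,l_{-i}^{n\prime})$ with $a\le b\le d$ and $a\le c\le d$, one must establish $\bar r\,[\phi(d)-\phi(b)]\ge\phi(c)-\phi(a)$ by tracking the actual sets of newly delivered signals and exploiting that the signals unlocked jointly by the extra own round and the larger $l_{-i}^n$ are precisely the deep ones requiring both. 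If the cross-difference cannot be signed in full generality, the fallback is to check the weaker Milgrom--Shannon single-crossing property, which still yields isotone best replies, or to argue existence constructively on the finite lattice.

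Granting the supermodular structure, existence is then immediate: by Topkis's monotonicity theorem the best-response correspondence on $\mathcal{L}^n$ is isotone, so Tarski's fixed-point theorem supplies a fixed point, which is a pure-strategy Nash equilibrium. Concretely, since the lattice is finite, I would exhibit the equilibrium by iterating best responses downward from the top profile $((L_1^n)_{max},\dots,(L_n^n)_{max})$: the resulting sequence is monotone nonincreasing and must stabilize at an equilibrium. This is exactly the conclusion of \cite{T1979}, which closes the proof.
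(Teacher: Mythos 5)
Your route is exactly the paper's route: the paper's entire proof of this lemma is the one-sentence assertion that an agent's payoff gain from waiting is weakly larger when other agents also wait more rounds, followed by the citation of \cite{T1979}. So you have reproduced the intended argument, and you have also put your finger on precisely the claim the paper leaves unproved, namely the increasing-differences inequality. The trouble is that this step does not just resist a generic composition argument --- it is false in general, for exactly the substitution reason you describe. Take six agents $\{i,j,a,b,c,u\}$ with edges $j\to a$, $a\to i$, $j\to b$, $b\to c$, $c\to i$, $u\to b$, so that the shortest path from $u$ to $i$ has length $3$ and hence $(L_i^n)_{max}=3$, while $l_a\in\{0,1\}$, $l_b\in\{0,1\}$, $l_c\in\{0,1,2\}$. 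Fix $l_b=1$, $l_c=2$, and compare $l_a=0$ against $l_a=1$. Tracking the dynamics (an exited agent relays only what she already holds): if $l_a=0$, agent $i$ holds $\{s_i,s_a,s_c,s_b\}$ after two rounds and gains $s_j,s_u$ at round three via $b\to c\to i$, so $k_i=4$ then $k_i=6$; if $l_a=1$, the signal $s_j$ already reaches $i$ at round two through $a$, so $k_i=5$ then $k_i=6$. Writing $\phi(k)=\psi-1/(\rho+\bar\rho k)$, increasing differences between own rounds $\{2,3\}$ and $l_a$ requires $\bar r^{3}\phi(6)-\bar r^{2}\phi(5)\geqslant \bar r^{3}\phi(6)-\bar r^{2}\phi(4)$, i.e.\ $\phi(5)\leqslant\phi(4)$, which is false since $\phi$ is strictly increasing. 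When the relay $a$ waits longer, information reaches $i$ \emph{sooner}, and the marginal value of $i$'s own waiting falls: own waiting and others' waiting are substitutes here, not complements.

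Your fallbacks do not repair this. In the same example, choose parameters with $\phi(4)<\bar r\,\phi(6)<\phi(5)$ (e.g.\ $\rho=\bar\rho=\psi=1$ and $\bar r=0.95$): then waiting the third round is strictly profitable for $i$ against $l_a=0$ but strictly unprofitable against $l_a=1$, so the Milgrom--Shannon single-crossing property fails as well, best replies are not isotone, and the downward best-response iteration from the top profile is no longer guaranteed to be monotone or to stabilize at an equilibrium. To be clear, this does not show the lemma itself is false --- a pure-strategy equilibrium may well exist in every such game --- but it shows that the supermodularity/Topkis machinery, which is both the paper's argument and your primary plan, does not apply as stated: the game is not supermodular, and the cross-difference you proposed to verify cannot be signed. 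Your instinct to flag this step as the main obstacle, and your identification of alternative relay paths as the source of the failure, are the correct diagnosis; the gap you could not close is a genuine gap in the paper's own proof, and closing it would require either restrictions on the network or parameters that rule out such configurations, or an entirely different existence argument.
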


We denote a pure-strategy Nash equilibrium of the game by $\sigma^{n,*}$, and the set of all pure-strategy Nash equilibria by $\Sigma^{n,*}$.  We further denote by $l_i^{n,\sigma^*}$ the communication steps after which agent $i$ exits under equilibrium $\sigma^{n,*}$, and denote by $k_i^{n,\sigma^*}$ the number of distinct signals agent $i$ has obtained when she exits under equilibrium $\sigma^{n,*}$.  In order to make sure that every agent $i$ gets non-negative payoffs and exits ultimately in the initial strategic circumstance of information exchange, we focus on information exchange games and associated equilibria that satisfy the following assumption in the rest of this section.

\begin{ass} \label{alwaysexit}
$\psi[\rho+\bar{\rho}(k_i^{n,\sigma^*})_{max}]\geqslant 1$ for all agent $i$, where $(k_i^{n,\sigma^*})_{max}$ is the maximum number of signals agent $i$ can get if all other agents choose their exit steps according to $\sigma^{n,*}$.
\end{ass}

The parameter $\psi$ captures the information sensitiveness of the decision problem.  Interestingly, the information sensitiveness of the decision problem is not monotone in $\psi$.  When $\psi$ takes negative or very small positive value, agents would like to wait forever to discount payoff to zero, in which case the decision problem is information irrelevant.  When $\psi$ is large enough, information is relevant.  Specifically, when $\psi$ is moderate, information effect dominates, and thus the decision problem is more information sensitive; while when $\psi$ is large, the discount effect dominates, and thus the decision problem is less information sensitive.  

Now we provide an example of the network game and its equilibrium.  On the four-agent graph displayed previously, suppose $\lambda = r$, $\psi = 1$ and $\rho = \bar\rho = \frac{1}{2}$.  The decision problem for agents $2$ and $4$ are simple.  They should exit right away because they will not get any new signals due to graph structure, but incur discounting penalty should they not act promptly.  The payoff matrix for agent $1$ (row) and $3$ (column) is as follows, in which the first and the second value in each cell are respectively the payoffs of agent 1 and agent 3 [see \eqref{payofffunction}].

\begin{center}
\def\temptablewidth{0.38\textwidth}
\begin{tabular*}{\temptablewidth}{cc|c|c}
\multicolumn{4}{c}{\qquad\qquad\qquad Agent 3}\\
& & 0 Step & 1 Step\\
 \hline
& 0 Step & 0, \text{ }0 & 0,\text{ }$\frac{1}{6}$\\\cline{2-4}
Agent 1 & 1 Step & $\frac{1}{4}$,\text{ }0 & $\frac{1}{4}$,\text{ }$\frac{1}{6}$\\
\cline{2-4}
& 2 Step & $\frac{1}{8}$,\text{ }0 & $\frac{3}{20}$,\text{ }$\frac{1}{6}$
\end{tabular*}
\end{center}
There is one equilibrium of the game.  In this equilibrium, agents $2$ and $4$ exits immediately after they receive their private signals, while agent $1$  and agent $3$ exit after the first communication round.

Before proceeding to discuss information aggregation or learning status, we briefly discuss the equilibrium outcomes of the strategy game in Definition \ref{game}.  This reduced game is a complete information static game, which involves no uncertainty.  However, the uncertainties in the fundamental and in the communication clock were abstracted out through taking expectations, which results in the deterministic payoff function (\ref{payofffunction}).  Therefore the two equilibrium outcomes, $l_i^{n,\sigma^*}$ and $k_i^{n,\sigma^*}$, both deterministic, characterize the strategic interactions of information exchange among agents in the initial circumstance.  This enables us to characterize a learning status by focusing only on such equilibrium outcomes.

We can perform the following comparative statics of the number of signals agent $i$ obtains under equilibrium $k_i^{n,\sigma^*}$.  Intuitively, $k_i^{n,\sigma^*}$ is larger when the discount rate is smaller or the Poisson clock is faster.  It is also larger when the precision of public information $\rho$ is lower or the decision problem is more information sensitive.  However, the precision of private information $\bar\rho$ has ambiguous impact on $k_i^{n,\sigma^*}$, because an increase in the precision of private information has two conflicting effects.  It increases not only the relative quality of the private signal at hand, which prompts an agent to exit earlier, but also the relative information content of her neighbors' private signals, which in turn encourages her to wait.  The former effect is stronger when the precision of public information is higher, while the latter is stronger when the precision of public information is lower.  The discussions in this paragraph can be formalized against mathematical rigor, but the game-theoretic technicality involved is beyond the scope of this paper.


Finally, we also remark that the role of $k_i^{n,\sigma^*}$ is our paper is similar to the influence vector $v$ in \cite{ACOT2011}.  The quantity $k_i^{n,\sigma^*}$ will play a central role in the next sections.

\section{Finite Population Learning}
\setcounter{equation}{0}
In this section, we measure the level of information aggregation in any given communication network.  Related recent research on learning in social networks focuses on asymptotic learning, which means that as the fraction of agents taking the correct action converging to one as the population of the social network grows large (\cite{ADLO2011, ABO2011}).
However, as discussed in \cite{AO2010}, people are also interested in the information dynamics away from long run limit. 
In pursuing this goal, a new concept of learning in social networks is introduced.  
\begin{defi} \label{fpl}
Given a social network $G_n$, the information exchange game $\Gamma_{\info}(G_n)$ and an equilibrium profile $\sigma^{n,*}$, for a triple $(\varepsilon, \bar\varepsilon, \delta)$, we say $G_n$ achieves $(\varepsilon, \bar\varepsilon, \delta)$-learning under $\sigma^{n,*}$ if
$$
\mathbb{P}_{\sigma^{n,*}}\left(\frac{1}{n}\sum_{i=1}^n \left(1-M_i^{n,\varepsilon}\right)\geqslant \bar\varepsilon\right)\leqslant \delta\,,
$$
where $M_i^{n,\varepsilon} = \textbf{1} (|x_i-\theta|\leqslant\varepsilon)$, $x_i$ is agent $i$'s optimal action upon exit, and $\mathbb{P}_{\sigma^{n,*}}$ denotes the conditional probability given $\sigma^{n,*}$\,.
\end{defi}

In this definition, the parameter $\varepsilon$ sets the precision on what the approximately correct decision is for individual agents, $1-\bar\varepsilon$ controls the fraction of agents who make the approximately correct decision, and $1-\delta$ represents the probability at which such a high fraction of agents make the approximately correct decision.  In particular, we highlight the difference between $\varepsilon$ and $\bar{\varepsilon}$, because these two parameters capture different tolerances. Concretely, $\varepsilon$ is at the individual level while $\bar{\varepsilon}$ is at the aggregate level.  


A natural question to ask is whether such finite population learning occurs in a given communication network. If so, under what conditions?  The following proposition provides a necessary condition and a sufficient condition for $(\varepsilon, \bar\varepsilon, \delta)$-learning in a given social network under any equilibrium profile.  When there is no confusion, we refer to the information exchange game $\Gamma_{\info}(G_n)$ simply as $G_n$. Denote by $\erf(x)=\frac{2}{\sqrt{\pi}}\int_{0}^x e^{-t^2} dt$ the error function of the standard normal distribution.
\begin{prop}\label{fini}
For a given social network $G_n$ under any equilibrium  $\sigma^*(=\sigma^{n,*})$,\\
(a) $(\varepsilon, \bar\varepsilon, \delta)$-learning does not occur if
\begin{equation} \label{fininece}
\frac{1}{n}\sum_{i=1}^n \erf\left(\varepsilon\sqrt{\frac{\rho +  \bar\rho k_i^{n,\sigma^{*}}}{2}}\right)<(1-\bar\varepsilon)(1-\delta)\,.
\end{equation}
(b) $(\varepsilon, \bar\varepsilon, \delta)$-learning occurs if
\begin{equation} \label{finisuff}
\frac{1}{n}\sum_{i=1}^n \erf\left(\varepsilon\sqrt{ \frac{\rho + \bar\rho k_i^{n,\sigma^{*}}}{2}}\right)\geqslant 1-\bar\varepsilon\delta\,.
\end{equation}
\end{prop}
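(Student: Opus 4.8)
The plan is to reduce both statements to a single distributional computation followed by two applications of Markov's inequality. The central object is the per-agent success probability $p_i := \mathbb{E}_{\sigma^*}[M_i^{n,\varepsilon}] = \mathbb{P}(|x_i - \theta| \leq \varepsilon)$; I claim the left-hand side of both (\ref{fininece}) and (\ref{finisuff}) is exactly $\frac{1}{n}\sum_i p_i$. Granting this, write $W := \frac{1}{n}\sum_i(1 - M_i^{n,\varepsilon})$ for the fraction of ``incorrect'' agents, so that $W \in [0,1]$ and, by linearity of expectation, $\mathbb{E}[W] = 1 - \frac{1}{n}\sum_i p_i$ regardless of how the $M_i^{n,\varepsilon}$ are correlated. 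Learning is the event $\mathbb{P}_{\sigma^*}(W \geq \bar\varepsilon) \leq \delta$, and everything comes down to bounding this tail from the known mean $\mathbb{E}[W]$.

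First I would establish the distributional claim. Since agent $i$ exits with $k_i^{n,\sigma^*}$ distinct signals and her optimal action is the posterior mean $x_i = \mathbb{E}[\theta \mid I_i^n]$, the conjugate Gaussian structure recalled in Section 2 gives that, conditional on $I_i^n$, the variable $\theta - x_i$ is $N\bigl(0, 1/(\rho + \bar\rho k_i^{n,\sigma^*})\bigr)$. Crucially this conditional law has mean $0$ and a variance that does not depend on the realized $I_i^n$, so it is also the unconditional law of $x_i - \theta$. Hence $p_i = \mathbb{P}\bigl(|Z| \leq \varepsilon\sqrt{\rho + \bar\rho k_i^{n,\sigma^*}}\bigr)$ for $Z \sim N(0,1)$, and using $\mathbb{P}(|Z| \leq a) = \erf(a/\sqrt{2})$ yields $p_i = \erf\bigl(\varepsilon\sqrt{(\rho + \bar\rho k_i^{n,\sigma^*})/2}\bigr)$, as claimed. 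I would emphasize here that $k_i^{n,\sigma^*}$ is a deterministic equilibrium outcome, so the only randomness entering $p_i$ is that of $\theta$ and the $z_j$.

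For part (b) I would apply Markov's inequality directly to the nonnegative variable $W$: $\mathbb{P}_{\sigma^*}(W \geq \bar\varepsilon) \leq \mathbb{E}[W]/\bar\varepsilon = \bigl(1 - \frac{1}{n}\sum_i p_i\bigr)/\bar\varepsilon$. The hypothesis (\ref{finisuff}), namely $\frac{1}{n}\sum_i p_i \geq 1 - \bar\varepsilon\delta$, forces $\mathbb{E}[W] \leq \bar\varepsilon\delta$, hence $\mathbb{P}_{\sigma^*}(W \geq \bar\varepsilon) \leq \delta$, which is exactly $(\varepsilon,\bar\varepsilon,\delta)$-learning. For part (a) I would instead bound the complementary event through $Y := 1 - W = \frac{1}{n}\sum_i M_i^{n,\varepsilon} \in [0,1]$, whose mean is $\frac{1}{n}\sum_i p_i$. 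Markov gives $\mathbb{P}_{\sigma^*}(W < \bar\varepsilon) = \mathbb{P}_{\sigma^*}(Y > 1-\bar\varepsilon) \leq \mathbb{E}[Y]/(1-\bar\varepsilon)$, and the hypothesis (\ref{fininece}) makes the right side strictly below $1 - \delta$; therefore $\mathbb{P}_{\sigma^*}(W \geq \bar\varepsilon) > \delta$ and learning fails.

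I do not expect a genuine obstacle, but the point demanding care is that the $M_i^{n,\varepsilon}$ are highly dependent across agents, since they share the common state $\theta$ and overlapping signals, so any argument relying on concentration or independence would be unavailable. The value of routing everything through Markov's inequality is precisely that it needs only the marginal means $p_i$ together with the boundedness of $W$ and $Y$, never their joint law. The only remaining thing to watch is the bookkeeping of strict versus non-strict inequalities, so that the strict hypothesis in (a) yields the strict conclusion $\mathbb{P}_{\sigma^*}(W \geq \bar\varepsilon) > \delta$ and the non-strict hypothesis in (b) yields $\mathbb{P}_{\sigma^*}(W \geq \bar\varepsilon) \leq \delta$, using $0 < \bar\varepsilon < 1$.
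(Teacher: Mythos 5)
Your proof is correct and follows essentially the same route as the paper's: the identity $\mathbb{E}_{\sigma^*}[M_i^{n,\varepsilon}]=\erf\bigl(\varepsilon\sqrt{(\rho+\bar\rho k_i^{n,\sigma^*})/2}\bigr)$ combined with two applications of Markov's inequality, one to $\frac{1}{n}\sum_i M_i^{n,\varepsilon}$ for part (a) and one to $\frac{1}{n}\sum_i(1-M_i^{n,\varepsilon})$ for part (b). The only difference is that you spell out the conjugate-Gaussian derivation of the marginal success probabilities and the independence-free role of Markov's inequality, which the paper leaves implicit; the substance is identical.
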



This proposition provides clear conditions for the occurrence of finite population learning.  Our conditions are more operative and transparent than their asymptotic counterparts in previous literature.  
Specifically, our conditions only require one equilibrium outcome $k_i^{n,\sigma^*}$, and the set $\{k_i^{n,\sigma^*}\}_{i=1}^n$ is directly induced by an equilibrium $\sigma^{n,*}$ in a communication network $G_n$.  Hence, conditions \eqref{fininece} and \eqref{finisuff} not only allow us to investigate the effect of learning in a given communication network, but also offer a more interpretable link between the communication equilibrium and its corresponding information aggregation status.

Conditions \eqref{fininece} and \eqref{finisuff} also allow us to untangle the interplay among parameters.
For example, we are able to answer the following question.
Given the tolerances $\varepsilon$, $\bar{\varepsilon}$, $\delta$ and the information precisions $\rho$ and $\bar{\rho}$, how does the change of $k_i^{n,\sigma^*}$ affect the occurrence of finite population learning in a given social network $G_n$?
When $k_i^{n,\sigma^*}$'s are sufficiently small to validate condition (\ref{fininece}), finite population learning does not occur.
Similarly, when most of $k_i^{n,\sigma^*}$'s are sufficiently large so that the condition (\ref{finisuff}) is satisfied, finite population learning occurs.  Similar marginal interpretations also apply to parameters $\varepsilon$, $\bar{\varepsilon}$, $\delta$, $\rho$ and $\bar{\rho}$.   Generally, finite population learning in a given social network $G_n$ is more likely to occur when the equilibrium induces larger numbers of signals obtained by agents. It is also more likely to occur when the tolerances and the information precisions are higher.  As  interplays among the parameters $\varepsilon$, $\bar{\varepsilon}$, $\delta$, $\rho$,  $\bar{\rho}$ and $k_i^{n,\sigma^*}$ are clear through \eqref{fininece} and \eqref{finisuff}, the two conditions provide various comparative statics that help us better understand learning in different social circumstances.
Since the total amount of information is fixed in any finite population network, these comparative statics indeed disentangle the effectiveness of information aggregation from the endowment of information, so that the net effect of information aggregation is transparent.



It is interesting to note that $(1-\bar{\varepsilon})(1-\delta)<1-\bar{\varepsilon}\delta$ for any $0<\bar{\varepsilon}, \delta<1$.  This gap indicates that failure of condition (\ref{fininece}) does not necessarily lead to condition (\ref{finisuff}), and vice versa.  
Two perspectives help understand this gap.  First, we use Markov's inequality to get tractable forms of the necessary and the sufficient conditions.  Sharper inequalities may lead to weaker conditions and thus probably fill a part of the gap, but they are likely to make these conditions intractable and less transparent.  Secondly and more importantly, as we discussed above,  conditions \eqref{fininece} and \eqref{finisuff} involve equilibrium outcomes in a clean and simple formula.
The cost for enjoying this clarity is that we did not fully utilize $\{k_i^{n,\sigma^*}\}_{i=1}^n$.


Also, a beauty of symmetry arises in our necessary and sufficient conditions for finite population learning. The parameters $\bar\varepsilon$ and $\delta$ are completely interchangeable in these conditions, which was not expected as they captures tolerances in different categories.  On the other hand, in our two conditions, parameter $\varepsilon$ stands in a position that is unchangeable with $\bar\varepsilon$ and $\delta$, which hints that $\varepsilon$ and $\bar\varepsilon$ play different roles in finite population learning. 


Conditions \eqref{fininece} and \eqref{finisuff} have powerful implications.  The next corollary establishes a necessary condition and a sufficient condition without equilibirum outcomes.  The proof is straightforward, but the results are non-trivial.  

\begin{coro} \label{finicoro}
For any social network $G_n$ and any equilibrium $\sigma^{n,*}$,\\
(a) $(\varepsilon, \bar\varepsilon, \delta)$-learning does not occur if
\begin{equation} \label{finicoronece}
\erf\left(\varepsilon\sqrt{\frac{\rho+\bar\rho n}{2}}\right)<(1-\bar\varepsilon)(1-\delta)\,.
\end{equation}
(b) $(\varepsilon, \bar\varepsilon, \delta)$-learning occurs if
\begin{equation} \label{finicorosuff}
\erf\left(\varepsilon \sqrt{\frac{\rho+\bar\rho}{2}}\right)\geqslant 1-\bar\varepsilon\delta\,.
\end{equation}
\end{coro}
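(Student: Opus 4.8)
The plan is to derive both parts directly from Proposition \ref{fini} by replacing each equilibrium outcome $k_i^{n,\sigma^*}$ with a uniform bound that is independent of the equilibrium and of the network topology. The two bounds I would use are elementary consequences of the model. Since the entire signal endowment of the network consists of the $n$ private signals $\{s_i\}_{i=1}^n$, no agent can ever accumulate more than $n$ distinct signals, so $k_i^{n,\sigma^*}\leqslant n$; and since every agent is endowed with her own private signal $s_i$ at $t=0$ and retains it through exit, every agent holds at least one signal, so $k_i^{n,\sigma^*}\geqslant 1$. The only other ingredient is that $\erf$ is strictly increasing on $[0,\infty)$, so that the map $x\mapsto \erf\!\left(\varepsilon\sqrt{(\rho+\bar\rho x)/2}\right)$ is nondecreasing in $x$.

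For part (a), I would bound each summand from above using $k_i^{n,\sigma^*}\leqslant n$ together with the monotonicity of $\erf$, obtaining
\[
\frac{1}{n}\sum_{i=1}^n \erf\left(\varepsilon\sqrt{\frac{\rho+\bar\rho k_i^{n,\sigma^*}}{2}}\right)\leqslant \erf\left(\varepsilon\sqrt{\frac{\rho+\bar\rho n}{2}}\right).
\]
Hence, whenever hypothesis \eqref{finicoronece} holds, the left-hand average is strictly below $(1-\bar\varepsilon)(1-\delta)$, which is exactly condition \eqref{fininece}; Proposition \ref{fini}(a) then rules out $(\varepsilon,\bar\varepsilon,\delta)$-learning. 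For part (b), I would argue symmetrically, bounding each summand from below using $k_i^{n,\sigma^*}\geqslant 1$:
\[
\frac{1}{n}\sum_{i=1}^n \erf\left(\varepsilon\sqrt{\frac{\rho+\bar\rho k_i^{n,\sigma^*}}{2}}\right)\geqslant \erf\left(\varepsilon\sqrt{\frac{\rho+\bar\rho}{2}}\right).
\]
Thus hypothesis \eqref{finicorosuff} forces the left-hand average to be at least $1-\bar\varepsilon\delta$, which is condition \eqref{finisuff}, and Proposition \ref{fini}(b) delivers $(\varepsilon,\bar\varepsilon,\delta)$-learning. Since the bounds $1\leqslant k_i^{n,\sigma^*}\leqslant n$ hold for every agent under every equilibrium, the conclusions are uniform over $\sigma^{n,*}$, as the statement requires.

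There is no serious obstacle in the argument itself; the substance of the corollary is simply that the worst- and best-case signal counts, namely $1$ and $n$, produce clean equilibrium-free thresholds. The one point deserving a word of care is the justification of the two bounds on $k_i^{n,\sigma^*}$ from the information-flow dynamics of the game, rather than the chain of inequalities, which is immediate. I would also note that the looseness of replacing every $k_i^{n,\sigma^*}$ by its extreme value is exactly the cost of this clarity: the gap between the two thresholds $(1-\bar\varepsilon)(1-\delta)$ and $1-\bar\varepsilon\delta$ is inherited from Proposition \ref{fini} and is only widened further here by these crude uniform bounds.
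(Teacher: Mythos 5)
Your proposal is correct and is exactly the paper's own argument: the paper proves Corollary \ref{finicoro} by the same observation that $1\leqslant k_i^{n,\sigma^*}\leqslant n$ for every agent under every equilibrium, combined with the monotonicity of $\erf$ and an appeal to Proposition \ref{fini}. Nothing is missing, and your closing remarks about uniformity over equilibria and the widened gap between the two thresholds match the paper's own discussion.
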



Corollary \ref{finicoro} follows from the fact that $1\leq k_i^{n,\sigma^*}\leq n$. It is interesting because under some circumstances, we can determine the occurrence of finite population learning without knowing either the structure of the social network or the equilibrium. Intuitively,  if any one parameter of the tolerances, information precisions or population size is too low, such that the condition (\ref{finicoronece}) is satisfied, we may conclude that finite population learning does not occur no matter how effective the communication network is organized.  Conversely, if any one of the tolerances or information precisions is sufficiently large such that condition (\ref{finicorosuff}) holds, we know that finite population learning surely occurs even if all agents are isolated.  

Finally, as the information exchange game exhibits strategic complementarity, it is expected that multiple equilibria might emerge under some circumstances.  An interesting perspective in investigating finite population learning is to measure the effect of learning against multiple equilibria.  
We provide the following generalized (conservative) version of finite population learning to accommodate multiple equilibria without equilibrium selection.

\begin{defi}
Denote by $\Sigma^{n,*}=\{\sigma^{n,*}\}$ the set of equilibria of $\Gamma_{\info}(G_n)$.  The $(\varepsilon, \bar\varepsilon, \delta)$-learning occurs if
$$
\sup_{\sigma^{n,*}\in \Sigma^{n,*}}\mathbb{P}_{\sigma^{n,*}}\left(\frac{1}{n}\sum_{i=1}^n(1-M_i^{n,\varepsilon})\geqslant\bar\varepsilon\right)\leqslant \delta\,.
$$
\end{defi}

This definition offers a conservative standard to evaluate finite population learning in the sense that the least favorable equilibrium determines the learning status.
When $\Sigma^{n,*}$ is a singleton, the above definition reduces to Definition \ref{fpl}. The proof of Proposition \ref{fini} can be recycled to derive the next corollary.
\begin{coro}\label{cor:fini_multi}
Given an information exchange game $\Gamma_{\info}(G_n)$,
\begin{itemize}
\item[(a)] $(\varepsilon, \bar\varepsilon, \delta)$-learning does not occur if

\begin{equation*} 
\min_{\sigma^{n,*}\in \Sigma^{n,*}}\frac{1}{n}\sum_{i=1}^n \erf\left(\varepsilon\sqrt{\frac{\rho + \bar\rho  k_i^{n,\sigma^*}}{2}}\right)<(1-\bar\varepsilon)(1-\delta)\,.
\end{equation*}

\item[(b)] $(\varepsilon, \bar\varepsilon, \delta)$-learning occurs if
\begin{equation*} 
\min_{\sigma^{n,*}\in \Sigma^{n,*}}\frac{1}{n}\sum_{i=1}^n \erf\left(\varepsilon\sqrt{\frac{\rho + \bar\rho  k_i^{n,\sigma^*}}{2}}\right)\geqslant 1-\bar\varepsilon\delta\,.
\end{equation*}
\end{itemize}
\end{coro}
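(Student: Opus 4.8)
The plan is to reduce this multiple-equilibrium statement to an equilibrium-by-equilibrium application of Proposition \ref{fini}, using that each strategy space $L_i^n$ in Definition \ref{game} is finite, so $\Sigma^{n,*}$ is a finite (and, by Lemma 1, nonempty) collection and both the minimum appearing in the hypotheses and the supremum in the conservative definition are attained. Throughout I would abbreviate $A(\sigma^{n,*}) = \frac{1}{n}\sum_{i=1}^n \erf\bigl(\varepsilon\sqrt{(\rho+\bar\rho k_i^{n,\sigma^*})/2}\bigr)$ and $P(\sigma^{n,*}) = \mathbb{P}_{\sigma^{n,*}}\bigl(\frac{1}{n}\sum_{i=1}^n(1-M_i^{n,\varepsilon})\geq\bar\varepsilon\bigr)$, so conservative learning occurs exactly when $\sup_{\sigma^{n,*}\in\Sigma^{n,*}} P(\sigma^{n,*})\leq\delta$. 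The engine is Proposition \ref{fini}, which I read as two per-equilibrium implications: $A(\sigma^{n,*}) < (1-\bar\varepsilon)(1-\delta)$ forces $P(\sigma^{n,*}) > \delta$, while $A(\sigma^{n,*}) \geq 1-\bar\varepsilon\delta$ forces $P(\sigma^{n,*}) \leq \delta$. These are precisely what one gets by applying the reverse and forward Markov bounds to the $[0,1]$-valued average $\frac{1}{n}\sum_i(1-M_i^{n,\varepsilon})$, whose mean under $\mathbb{P}_{\sigma^{n,*}}$ equals $1 - A(\sigma^{n,*})$ because the Gaussian posterior gives $\mathbb{E}_{\sigma^{n,*}}[M_i^{n,\varepsilon}] = \erf\bigl(\varepsilon\sqrt{(\rho+\bar\rho k_i^{n,\sigma^*})/2}\bigr)$; this is exactly the computation I would recycle.

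For part (b), I would observe that the hypothesis $\min_{\sigma^{n,*}\in\Sigma^{n,*}} A(\sigma^{n,*}) \geq 1-\bar\varepsilon\delta$ means \emph{every} $\sigma^{n,*}\in\Sigma^{n,*}$ satisfies $A(\sigma^{n,*}) \geq 1-\bar\varepsilon\delta$. Applying the forward-Markov implication of Proposition \ref{fini}(b) to each such equilibrium yields $P(\sigma^{n,*})\leq\delta$ uniformly over $\Sigma^{n,*}$, and hence $\sup_{\sigma^{n,*}} P(\sigma^{n,*})\leq\delta$, so conservative learning occurs.

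For part (a), the hypothesis $\min_{\sigma^{n,*}} A(\sigma^{n,*}) < (1-\bar\varepsilon)(1-\delta)$ is witnessed by a single minimizing equilibrium $\sigma_0\in\Sigma^{n,*}$, which exists because $\Sigma^{n,*}$ is finite. Applying the reverse-Markov implication of Proposition \ref{fini}(a) to $\sigma_0$ gives the strict bound $P(\sigma_0) > \delta$, so $\sup_{\sigma^{n,*}} P(\sigma^{n,*}) \geq P(\sigma_0) > \delta$ and conservative learning fails.

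The content is almost entirely bookkeeping, so the one point I would flag rather than treat as automatic is the interplay of quantifiers, which is exactly where the \emph{conservative} (supremum) definition does its work. A single information-poor equilibrium attaining the minimum in (a) already defeats learning because the supremum detects it, whereas in (b) excluding any learning failure requires the minimum---hence every equilibrium---to clear the higher threshold $1-\bar\varepsilon\delta$. Finiteness of $\Sigma^{n,*}$, inherited from the finiteness of each $L_i^n$, is what lets me legitimately write $\min$ and pick out the witness $\sigma_0$; I expect this to be the main (and only) subtlety worth a sentence of justification.
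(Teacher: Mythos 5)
Your proposal is correct and is exactly the ``recycling'' of Proposition \ref{fini} that the paper intends: apply the per-equilibrium Markov-inequality implications $A(\sigma^{n,*})<(1-\bar\varepsilon)(1-\delta)\Rightarrow P(\sigma^{n,*})>\delta$ and $A(\sigma^{n,*})\geqslant 1-\bar\varepsilon\delta\Rightarrow P(\sigma^{n,*})\leqslant\delta$, then note that a single witnessing equilibrium defeats the supremum in (a) while (b) requires every equilibrium to clear the threshold. The only minor remark is that the finiteness of $\Sigma^{n,*}$ is not actually needed for the logic---an infimum strictly below $(1-\bar\varepsilon)(1-\delta)$ already yields a witness, and an infimum at least $1-\bar\varepsilon\delta$ already bounds every equilibrium---though it does justify writing $\min$ in the statement.
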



\section{Perfect Learning and the Rates}
\setcounter{equation}{0}
Based on the analysis of finite population learning, we consider the effect of information aggregation and learning as population in communication networks grows.  Our approach to address the limiting behavior of learning  is different from asymptotic learning in existing literature (\cite{ADLO2011, ABO2011}).   In particular, we highlight finite population learning as the foundation of asymptotic learning.  Consequently, we are able to check learning status all along the path to the limit, and the probabilistic tolerance parameters naturally induce learning rates.  This concept of learning rate is different from what is employed in \cite{GJ2012b, GJ2012a, GJ2011} that focuses on the time dimension.  

\subsection{Perfect Learning}



Recall that we have three tolerance parameters $\varepsilon$, $\bar\varepsilon$, and $\delta$ that define $(\varepsilon, \bar\varepsilon, \delta)$-learning. To inquire the limiting behavior in a society $\{G_n\}_{n=1}^\infty$, where existing links are kept when networks grow,  we can focus on one parameter at a time, keeping the other two fixed.   The following definition introduces $\delta$-perfect learning on a given society $\{G_n\}_{n=1}^\infty$.

\begin{defi}
We say $\delta$-perfect learning occurs in society $\{G_n\}_{n=1}^\infty$ under equilibria $\{\sigma^{n,*}\}_{n=1}^{\infty}$ if there exists a vanishing positive sequence $\{\delta_n\}_{n=1}^{\infty}$ such that
$(\varepsilon, \bar\varepsilon, \delta_n)$-learning occurs in $G_n$ under its associated $\sigma^{n,*}$ for all $n$\,.
\end{defi}




Compared to the perfect asymptotic learning concept in \cite{ABO2011}, our definition of perfect learning is both stronger and more general for the following reasons.  First, we require the networks in the society to achieve learning not only in the limit but also all along the path towards the limit. Second, by focusing on different parameters $\varepsilon$, $\bar\varepsilon$ and $\delta$, we could address three different kinds of asymptotic learning. As discussed in the previous section, these three parameters exhibit different impacts on finite population learning, so that they can play different roles in perfect learning.  
Third and most importantly, this definition allows us to investigate learning rates, which is the focus of the next subsection.



In the following, we will derive two sufficient conditions for $\delta$-perfect learning.  The first condition, stated as Proposition \ref{equilibriuminformedlearn}, relies on the equilibrium outcome $k_i^{n,\sigma^*}$.  The second condition, stated as Proposition \ref{socialinformedlearn},  relies only on the formation of the society.  To deliver the first sufficient condition, we define an \textit{equilibrium informed agent} in a society.


\begin{defi}[\textsc{Equilibrium Informed Agent}] \label{equilibriuminformed}
For agent $i$ in a given society $\{G_n\}_{n=1}^\infty$, she is equilibrium informed with respect to $\{G_n\}_{n=1}^\infty$ under equilibria $\{\sigma^{n,*}\}_{n=1}^\infty$ if
\begin{equation*}
\lim_{n\to\infty}k_i^{n,\sigma^*}=\infty\,.
\end{equation*}
\end{defi}

An agent has equilibrium informed status means that she enjoys increasing information advantage as population grows. The next proposition offers a sufficient condition for $\delta$-perfect learning.  In a similar spirit, we have a more general sufficient condition, Lemma \ref{LEM: perfect1}, in the supplementary materials. The proof of Proposition \ref{equilibriuminformedlearn}  is omitted as it is a corollary to Lemma \ref{LEM: perfect1}.  

\begin{prop} \label{equilibriuminformedlearn}
The $\delta$-perfect learning occurs in a society $\{G_n\}_{n=1}^\infty$ under equilibria $\{\sigma^{n,*}\}_{n=1}^\infty$ if
$$
\lim_{n\to\infty}\frac{1}{n}|\EI^{n,*}|=1\,,
$$
where $\EI^{n,*}$ the set of equilibrium informed agents in the network $G_n$ under equilibrium $\sigma^{n,*}$.
\end{prop}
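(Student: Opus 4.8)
The plan is to deduce Proposition \ref{equilibriuminformedlearn} from Proposition \ref{fini}(b) by showing that the hypothesis $\lim_{n\to\infty}\frac{1}{n}|\EI^{n,*}|=1$ forces the empirical average of error-function terms in condition \eqref{finisuff} to converge to $1$, so that for any fixed $\bar\varepsilon$ the sufficient condition for $(\varepsilon,\bar\varepsilon,\delta_n)$-learning can be met with a sequence $\delta_n\to0$. The key observation is that the map $k\mapsto\erf\!\left(\varepsilon\sqrt{(\rho+\bar\rho k)/2}\right)$ is increasing in $k$ and tends to $1$ as $k\to\infty$, since $\erf(x)\to1$. Thus each equilibrium informed agent (for whom $k_i^{n,\sigma^*}\to\infty$) eventually contributes a term arbitrarily close to $1$ to the average, while every term is at least $\erf\!\left(\varepsilon\sqrt{\rho/2}\right)>0$ because $k_i^{n,\sigma^*}\geqslant1$.

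First I would fix an arbitrary $\eta>0$ and choose a threshold $K$ so large that $\erf\!\left(\varepsilon\sqrt{(\rho+\bar\rho K)/2}\right)\geqslant1-\eta$. For each $n$, split the index set $\{1,\dots,n\}$ into agents with $k_i^{n,\sigma^*}\geqslant K$ and the rest. Because each equilibrium informed agent satisfies $k_i^{n,\sigma^*}\to\infty$, one expects that asymptotically almost all of the agents counted in $\EI^{n,*}$ eventually exceed the threshold $K$; combined with the hypothesis that $|\EI^{n,*}|/n\to1$, the fraction of agents with $k_i^{n,\sigma^*}\geqslant K$ tends to $1$. I would then bound the empirical average from below by $(1-\eta)$ times the fraction of high-$k$ agents plus the nonnegative (indeed $\geqslant\erf(\varepsilon\sqrt{\rho/2})$) contribution of the remaining vanishing fraction, concluding that
\begin{equation*}
\liminf_{n\to\infty}\frac{1}{n}\sum_{i=1}^n\erf\!\left(\varepsilon\sqrt{\frac{\rho+\bar\rho k_i^{n,\sigma^*}}{2}}\right)\geqslant1-\eta.
\end{equation*}
Since $\eta>0$ is arbitrary, the average converges to $1$.

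With the average converging to $1$, I would define $\delta_n$ directly from the slack in \eqref{finisuff}: set $A_n=\frac{1}{n}\sum_{i}\erf(\varepsilon\sqrt{(\rho+\bar\rho k_i^{n,\sigma^*})/2})$ and let $\delta_n=(1-A_n)/\bar\varepsilon$ (truncated to lie in $(0,1]$). Then $A_n\geqslant1-\bar\varepsilon\delta_n$ holds by construction, so Proposition \ref{fini}(b) yields $(\varepsilon,\bar\varepsilon,\delta_n)$-learning in $G_n$, and $A_n\to1$ gives $\delta_n\to0$, which is exactly $\delta$-perfect learning. A minor technical point is to ensure $\delta_n$ is a valid tolerance; one can take $\delta_n=\min\{1,(1-A_n)/\bar\varepsilon\}$, and for large $n$ the minimum is attained by the second argument.

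The main obstacle, and the step deserving the most care, is the exchange of the two limits implicit in the first paragraph: the hypothesis gives a \emph{pointwise} limit $k_i^{n,\sigma^*}\to\infty$ for each fixed equilibrium informed agent together with a statement about the \emph{proportion} of such agents, but I need a uniform statement that the fraction of agents \emph{simultaneously} exceeding a fixed threshold $K$ tends to $1$. Because the assumption that existing links are kept as networks grow makes $k_i^{n,\sigma^*}$ nondecreasing in $n$ for each fixed $i$, this monotonicity should let me pass from the pointwise divergence to the required uniform threshold-crossing without appealing to any dominated-convergence machinery; handling the bookkeeping of this monotone passage cleanly (and confirming it is exactly what the more general Lemma \ref{LEM: perfect1} packages) is where the argument must be made rigorous rather than merely suggestive.
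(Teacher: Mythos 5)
Your overall architecture---invoke Proposition \ref{fini}(b), show the empirical average $A_n$ of the $\erf$ terms tends to $1$, and read $\delta_n$ off the slack---is exactly the route the paper intends (its proof is omitted as a corollary to Lemma \ref{LEM: perfect1}, whose own proof is this computation), and your final step defining $\delta_n=\min\{1,(1-A_n)/\bar\varepsilon\}$ is fine. The genuine gap is the step you yourself flagged, and your proposed repair does not work. First, monotonicity of $k_i^{n,\sigma^*}$ in $n$ is nowhere established: keeping existing links constrains the graphs, but the equilibria $\sigma^{n,*}$ are chosen separately for each $n$. Second, and decisively, even granting monotonicity the uniformization is false. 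Consider the society in which agent $i$ receives links from exactly the agents $j\geqslant 2^i$, with $\psi$ chosen so that $\frac{1}{\rho+\bar\rho}<\psi<\frac{1}{(1-\bar r)(\rho+\bar\rho)}$. An agent with no in-neighbors strictly prefers to exit at round $0$, so $k_i^{n,\sigma^*}=1$ in any equilibrium; an agent with sufficiently many in-neighbors strictly prefers to wait at least one round, since mandatory truthful transmission guarantees her their private signals, so for each \emph{fixed} $i$ and all large $n$ we have $k_i^{n,\sigma^*}\geqslant n-2^i+2$. Thus every agent is equilibrium informed, $|\EI^{n,*}|/n=1$ for all $n$, and $k_i^{n,\sigma^*}$ is nondecreasing in $n$---yet in $G_n$ every agent $i>\log_2 n$ still has $k_i^{n,\sigma^*}=1$, a fraction tending to one. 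Hence for every fixed $K\geqslant 2$ the fraction of agents with $k_i^{n,\sigma^*}\geqslant K$ tends to $0$, not $1$; consequently $A_n\to\erf\bigl(\varepsilon\sqrt{(\rho+\bar\rho)/2}\bigr)<1$ and your $\delta_n$ stays bounded away from zero.

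This is not a bookkeeping issue: what Proposition \ref{fini}(b) needs is the uniform statement that for every $K$ the fraction of agents with $k_i^{n,\sigma^*}\geqslant K$ tends to $1$ (equivalently, the data of Lemma \ref{LEM: perfect1}: a single diverging $f_1(n)$ with $b_n^1\to 1$), and the hypothesis $|\EI^{n,*}|/n\to 1$, being an agent-by-agent pointwise statement, does not imply it. The paper does not bridge this gap either; applying Lemma \ref{LEM: perfect1} would require producing such an $f_1(n)$, e.g.\ $f_1(n)=\min_{i\in\EI^{n,*}}k_i^{n,\sigma^*}$, and in the example above this minimum equals $1$ for every $n$. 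Worse, in that example one can check, by a law of large numbers conditional on $\theta$ (the errors of single-signal agents are conditionally independent), that for small $\bar\varepsilon$ the probability that at least a $\bar\varepsilon$ fraction of agents err by more than $\varepsilon$ tends to one, so $\delta$-perfect learning genuinely fails there. So the defect cannot be patched inside the proof: the proposition requires the stronger uniform hypothesis, under which your threshold argument goes through verbatim. Your write-up at least has the virtue of making explicit the exchange-of-limits step that the paper's omitted proof silently assumes.
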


Proposition \ref{equilibriuminformedlearn} states that perfect learning occurs when almost all agents are equilibrium informed.  This is consistent with the idea of social learning that successful learning allows individuals to have sufficient information to make a good decision,  and that such individuals represent an overwhelming proportion of the society.
We can have such a transparent condition because our perfect learning concept is powered by finite population learning, a sufficient condition of which only involves one set of equilibrium variables:  $\{k_i^{n,\sigma^*}\}$.


Next we consider the second sufficient condition that relies only on formation of the society. To streamline the presentation in the main texts, we assume that each agent enjoys a positive payoff even if she exits at the beginning.  
From (\ref{jf1}), this is equivalent to the following assumption.

\begin{ass}\label{Assumption:regular}
$(\rho+\bar{\rho})\psi>1$\,.
\end{ass}

We will hold this assumption for the rest of this section. In the supplementary materials, we relax this assumption and discuss all possible cases.

Before looking into the next sufficient condition for perfect learning, we first point out an important observation which states that, although the number of signals an agent gets in equilibrium may diverge to infinity with growth of the communication network, the equilibrium communication steps will not increase unboundedly.

\begin{lem} \label{round short}
Under Assumption \ref{Assumption:regular}, for any agent $i$, the communication rounds she optimally experiences before taking an action in any social network $G_n$ is bounded from above by a constant independent of $n$. Mathematically,
\begin{equation} \label{roundshortbound}
l_i^{n, \sigma^*}\leqslant l_i^n<\min\left\{(L_i^n)_{max}, \ln\left(1-\frac{1}{(\rho+\bar{\rho})\psi}\right)/\ln\bar{r}\right\}\,,
\end{equation}
in which $l_i^n$ stands for the optimal communication rounds for agent $i$ given that other agents wait until the maximum allowable step, and $(L_i^n)_{max}$ is the maximum length of all paths ended with $i$ in $G_n$.
\end{lem}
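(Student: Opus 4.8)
The plan is to prove the two substantive inequalities separately: the comparative-statics bound $l_i^{n,\sigma^*}\leqslant l_i^n$, and the $n$-free bound $l_i^n<\ln\!\left(1-\frac{1}{(\rho+\bar{\rho})\psi}\right)/\ln\bar{r}$; the constraint $l_i^n\leqslant (L_i^n)_{max}$ is immediate from the strategy space, since after round $(L_i^n)_{max}$ no signal can still reach $i$ and further waiting merely discounts the payoff.

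For the first inequality I would use the supermodular structure of $\Gamma_{\info}(G_n)$ already recorded before the existence lemma. Agent $i$'s incremental payoff from waiting one more round is nondecreasing in the other agents' exit rounds $l_{-i}^n$, because the later the others exit the more signals continue to propagate to $i$ at each successive round, so the bracket $\psi-\frac{1}{\rho+\bar{\rho}k_i^{n,l^n}}$ can only increase. Hence $i$'s best-response correspondence is monotone in $l_{-i}^n$ in the sense of \cite{T1979}, and the benchmark profile in which every other agent waits to her maximal allowable step dominates, componentwise, any equilibrium profile $l_{-i}^{n,\sigma^*}$. Applying monotonicity, the largest best response against the maximal benchmark, which is exactly $l_i^n$, is at least the equilibrium exit round $l_i^{n,\sigma^*}$, giving $l_i^{n,\sigma^*}\leqslant l_i^n$.

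For the main bound I would isolate the cap on the bracket: for any finite signal count $k$ one has $\psi-\frac{1}{\rho+\bar{\rho}k}<\psi$, so exiting at round $l$ yields strictly less than $\bar{r}^l\psi$. Immediate exit at round $0$ (with her own signal, $k=1$) instead yields $\psi-\frac{1}{\rho+\bar{\rho}}$, which is strictly positive under Assumption \ref{Assumption:regular}. Since $\bar{r}\in(0,1)$, the quantity $\bar{r}^l\psi$ decays geometrically in $l$, and solving $\bar{r}^l\psi=\psi-\frac{1}{\rho+\bar{\rho}}$ gives the threshold $l=\ln\!\left(1-\frac{1}{(\rho+\bar{\rho})\psi}\right)/\ln\bar{r}$, where both logarithms are negative so the ratio is positive. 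For every $l$ at or beyond this threshold, $\bar{r}^l\psi\leqslant\psi-\frac{1}{\rho+\bar{\rho}}$, whence exiting at round $l$ is strictly dominated by immediate exit; therefore the optimizer $l_i^n$ must lie strictly below the threshold. Because the threshold depends only on $\rho,\bar{\rho},\psi,\bar{r}$, it is independent of $n$, which is the crux of the lemma.

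Combining $l_i^{n,\sigma^*}\leqslant l_i^n$ with the strategy-space cap and the threshold bound yields the stated minimum. I expect the monotone best-response reduction in the second paragraph to be the delicate part: one must verify that the qualitative supermodularity translates into genuine componentwise monotonicity of the best response across the full strategy lattice, and that comparing against the maximal benchmark correctly upper-bounds \emph{every} equilibrium rather than just one. By contrast, the discounting bound is a short computation once the cap $\psi$ on the bracket is extracted and compared against the positive immediate-exit payoff guaranteed by Assumption \ref{Assumption:regular}.
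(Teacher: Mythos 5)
Your proof is correct and follows essentially the same route as the paper's: the key $n$-free bound comes from comparing the discounted payoff cap $\bar{r}^{l}\psi$ against the strictly positive immediate-exit payoff $\psi-\frac{1}{\rho+\bar{\rho}}$ guaranteed by Assumption \ref{Assumption:regular}, exactly as in Case (a) of the paper's proof of the generalized lemma in the supplementary materials. The only difference is one of rigor rather than substance: you formalize $l_i^{n,\sigma^*}\leqslant l_i^n$ through Topkis-style monotone best responses against the maximal-waiting benchmark, whereas the paper asserts this step as obvious on the grounds that other agents wait weakly less in equilibrium than in that benchmark.
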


A more general version (without Assumption \ref{Assumption:regular}) of Lemma \ref{round short} with its associated proof is included in the supplementary materials as Lemma \ref{round}.  A key idea behind Lemma \ref{round short} is that after agent $i$ gets sufficiently large number of signals within some finite communication steps, even expecting infinite number of signals does not justify the discount of further waiting.  The intuition is that for well connected agents, they will get sufficient information after a few communication rounds to make a decision, whereas for the not well connected agents, waiting too long discounts their information value. From condition (\ref{roundshortbound}), we see that the upper bound is exclusively determined by parameters of the information exchange game.

Lemma \ref{round short} plays an important role in shaping our next sufficient condition that bypasses equilibrium and directly links perfect learning to network formations.  Recall Proposition \ref{equilibriuminformedlearn} which states that almost all agents' $k_i^{n,\sigma^*} \to \infty$ is sufficient for perfect learning.  On the other hand, from  Lemma \ref{round short} we know that no agent has an optimal unbounded communication step $l_i^{n,\sigma^*}$.  By combining the two observations, we know the only possibility to validate Proposition \ref{equilibriuminformedlearn} is that almost all agents get unbounded number of signals within finite communication steps.
This consideration leads to our following definition of a \textit{socially informed agent}.

\begin{defi}[\textit{Socially Informed Agent}] \label{socialinformed}
For each agent $i$ in a given society $\{G_n\}_{n=1}^\infty$, let $L_i = \min\{l_0\in\mathbb{N}: \lim_{n\rightarrow\infty}|B^n_{i,l_0}|=\infty\}$, where $B^n_{i,l}$ is the set of agents in $G_n$ whose shortest path to $i$ has length at most $l$. Agent $i$ is socially informed with respect to $\{G_n\}_{n=1}^\infty$ if $L_i$ is finite, and if there exists $N\in\mathbb{N}$ such that for $n\geqslant N$, we have
\begin{equation} \label{positiveinform}
\psi-\frac{1}{\rho+\bar{\rho}|B_{i,L_i}^n|}>0\,,
\end{equation}
and
\begin{equation} \label{incentiveinform}
\bar{r}^{L_i}\left(\psi-\frac{1}{\rho+\bar{\rho}|B_{i,L_i}^n|}\right) > \bar{r}^l\left(\psi-\frac{1}{\rho+\bar{\rho}|B_{i,l}^n|}\right)\\ \text{ for all } 0 \leq l< L_i\,.
\end{equation}
Moreover, we denote by $\SI^n$ the set of socially informed agents in the network $G_n$.
\end{defi}

In Definition \ref{socialinformed}, condition (\ref{positiveinform}) is automatically satisfied in view of Assumption \ref{Assumption:regular}.  Intuitively, a socially informed agent can be reached by a large number of neighbors after some finite communication steps $L_i$.  Furthermore, condition (\ref{incentiveinform}) ensures that this agent strictly prefers to wait at least until the arrival of such communication step $L_i$, given other agents never exit.  Therefore, agent $i$ is guaranteed to obtain a large number of signals from finite communication steps, if other agents never exit.  Note also that the definition of a socially informed agent does not require knowledge of any specific equilibrium.  It only depends on the topological structure of the graph and on the parameters in the information exchange game. With the help of socially informed agents, we bypass equilibrium and state the following sufficient condition for perfect learning.


\begin{prop} \label{socialinformedlearn}
The $\delta$-perfect learning occurs in a society $\{G_n\}_{n=1}^\infty$ under any equilibrium $\{\sigma^{n,*}\}_{n=1}^\infty$ if
$$
\lim_{n\to\infty}\frac{1}{n}|\SI^n|=1\,.
$$
\end{prop}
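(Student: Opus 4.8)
The plan is to reduce Proposition \ref{socialinformedlearn} to the already-established Proposition \ref{equilibriuminformedlearn} by showing that, under an arbitrary fixed equilibrium $\{\sigma^{n,*}\}$, every socially informed agent is equilibrium informed, i.e.\ $\SI^n\subseteq\EI^{n,*}$. Granting this, $\frac1n|\SI^n|\leqslant\frac1n|\EI^{n,*}|\leqslant 1$, so the hypothesis $\frac1n|\SI^n|\to1$ forces $\frac1n|\EI^{n,*}|\to1$, and Proposition \ref{equilibriuminformedlearn} delivers $\delta$-perfect learning. Thus the entire content is the single implication: socially informed $\Rightarrow \lim_{n\to\infty}k_i^{n,\sigma^*}=\infty$, for any equilibrium.

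To prove that implication I would fix a socially informed agent $i$ with finite index $L_i$ as in Definition \ref{socialinformed}. Since $|B_{i,L_i-1}^n|$ stays bounded while $|B_{i,L_i}^n|\to\infty$, a pigeonhole argument on shortest-path predecessors produces a backbone chain $i=a_0\leftarrow a_1\leftarrow\cdots\leftarrow a_{L_i-1}$, in which $a_m$ lies at distance $m$ from $i$, the edge $a_{m+1}\to a_m$ is used on unboundedly many shortest paths into $i$, and $|B_{a_m,L_i-m}^n|\to\infty$; in particular the leaf $a_{L_i-1}$ acquires an unbounded number of direct in-neighbors. Note that a strictly smaller growth index for any $a_m$ would, by routing through the chain, contradict the minimality of $L_i$, so these growth indices equal the chain distances exactly.

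The heart of the argument is an induction from the leaf back to $i$ that upgrades the incentive conditions of Definition \ref{socialinformed} --- stated under the hypothetical ``other agents never exit'' --- into conclusions valid in the actual equilibrium. The base case is exit-robust: at the first communication round every direct in-neighbor transmits its own private signal whether or not it has exited, so $a_{L_i-1}$ receives $|B_{a_{L_i-1},1}^n|\to\infty$ signals by round $1$ no matter what others do; since its round-$1$ payoff then tends to $\bar r\,\psi$ while immediate exit yields only $\psi-1/(\rho+\bar\rho)$, condition (\ref{incentiveinform}) forces $l_{a_{L_i-1}}^{n,\sigma^*}\geqslant 1$ in every equilibrium. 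For the inductive step, once $a_{m+1}$ robustly holds an unbounded number of signals by round $L_i-m-1$ and waits that long, it relays them to $a_m$ at round $L_i-m$ (a frozen, exited relay still transmits its stored set), so $a_m$ robustly receives an unbounded number of signals by round $L_i-m$ and, by its own incentive condition, waits until then. Running the induction to $a_0=i$ yields $k_i^{n,\sigma^*}\to\infty$, and Lemma \ref{round short} keeps all these optimal rounds bounded, so the chain is finite and the induction terminates after $L_i$ steps.

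The main obstacle is precisely this passage from the ``others never exit'' incentive conditions to equilibrium behavior in the presence of blocking: an early exit by an intermediate agent freezes its information set and can prevent distant signals from ever reaching $i$. Two points must be nailed down. First, the incentive-to-wait must propagate down the backbone, i.e.\ agent $i$'s condition (\ref{incentiveinform}) must imply the analogous condition for each $a_m$; I expect this from a monotonicity-in-distance comparison of the payoffs $\bar r^{\,l}\bigl(\psi-1/(\rho+\bar\rho|B_{\cdot,l}^n|)\bigr)$, since an agent closer to the information source faces a weaker (easier) form of the same inequality. Second, the strict inequalities in (\ref{incentiveinform}) must survive the $n\to\infty$ limit, so that the robustly large payoff at round $L_i-m$ eventually dominates the payoff of every earlier exit; here one uses that $|B_{a_m,l}^n|$ is bounded for $l<L_i-m$ while the robustly received count at round $L_i-m$ diverges, and that condition (\ref{positiveinform}), automatic under Assumption \ref{Assumption:regular}, keeps all relevant payoffs positive throughout.
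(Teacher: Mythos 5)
Your proposal is correct and follows essentially the same route as the paper's own proof: reduce to Proposition \ref{equilibriuminformedlearn} (equivalently Lemma \ref{LEM: perfect1}) by showing every socially informed agent is equilibrium informed under any equilibrium, construct the backbone chain by pigeonhole on the (eventually bounded) set of agents at distance $L_i-1$ from $i$, and run the leaf-to-root induction in which exit-robust first-round transmission and relaying of frozen information sets propagate unbounded signal counts, while agent $i$'s conditions (\ref{positiveinform})--(\ref{incentiveinform}) are transferred to the intermediate chain agents via the ball-containment monotonicity $B_{a_m,l}^n\subseteq B_{i,m+l}^n$ together with a limiting argument. The two ``obstacles'' you flag at the end are precisely the steps the paper carries out explicitly (its inequalities (\ref{proofsi1})--(\ref{proofsi10})), so nothing in your plan would fail.
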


Proposition \ref{socialinformedlearn} is interesting because we can determine the occurrence of perfect learning through knowledge on the formation of society alone.  Given the tractable conditions for socially informed agents, we can check whether a given society sufficiently supports perfect learning under any equilibrium.  Especially, given the difficulty of explicitly solving for equilibria of the information exchange game in general cases, Proposition \ref{socialinformedlearn} is of more value. 



\subsection{Learning Rates}
In this subsection we define the learning rate for $\delta$-perfect learning.  
It is natural to expect similar concepts of learning rates for $\varepsilon$-perfect learning and $\bar\varepsilon$-perfect learning.

\begin{defi}
If $\delta$-perfect learning occurs in $\{G_n\}_{n=1}^\infty$ under equilibria $\{\sigma^{n,*}\}_{n=1}^{\infty}$, then we call the corresponding sequence of tolerances $\{\delta_n\}_{n=1}^{\infty}$ the learning rate.
\end{defi}



It is worth highlighting the difference between our learning rate concept and the speed of convergence to a pre-defined consensus mainly employed in observational learning problems. The latter concerns about the time towards a consensus in a circumstance where people make repeated decisions and learn from others' previous decisions to help make their own future decisions (\cite{GJ2012b, GJ2012a, GJ2011}).  In observational learning problems with repeated decisions, the observable sequence of aggregate decisions naturally reveals the time dynamics of information aggregation.  However in direct communication setup,  dynamics of information aggregation along the time dimension is largely unobservable, which calls for alternative dimensions to look into the information dynamics. Tolerance parameters $\{\delta_n\}_{i=1}^{\infty}$ offer a natural standpoint to look into the information aggregation dynamics.  This feature also distinguishes our work from existing literature on learning rate of similar spirit.  For example,  \cite{ADLO2009} attempt to define and investigate an asymptotic learning based rate in an observational learning context.  Although their concept also captures a sequence of diminishing probabilities, it does not characterize the learning status in every social network along the society.




On the other hand,  it is not trivial to construct concretely the smallest sequence $\{\delta_n\}_{n=1}^{\infty}$ for perfect learning, while keeping other parameters fixed.  Recall that the sufficient condition part of Proposition \ref{fini} implies
$\delta$-perfect learning occurs with rates $\{\delta_n\}_{n=1}^{\infty}$ if
\begin{equation*} 
\frac{1}{n}\sum_{i=1}^n \erf\left(\varepsilon \sqrt{ \frac{\rho+\bar\rho k_i^{n,\sigma^*}}{2}}\right)\geqslant 1-\delta_n\bar\varepsilon\,.
\end{equation*}
Theoretically, if we can directly solve the inequalities with respect to $\delta_n$, the achievable fastest learning rate $\{\delta_n\}_{n=1}^{\infty}$ is constructed.  However, some technical problems prevent us from directly doing so.  First, we cannot read off a transparent rate out of the error function.  Second, without specific knowledge of network formations, the relation between $k_i^{n,\sigma^*}$ and $n$ is hard to be pinned down generally. Moreover, as we will see in the binomial tree example, $k_i^{n,\sigma^*}$ could be drastically different even for a same graph. 
Hence, we will first discuss learning rates on specific examples, and generalize to more general categories when possible.

\begin{examp}[\textsc{Isolated Agents}] \label{isolated}
When all agents are isolated from each other in a communication network $G_n$, we have $k_i^{n,\sigma^*}=1$ for every agent $i$.
\end{examp}
In Example \ref{isolated}, the negative condition (\ref{fininece}) is reduced to
$$
\erf\left(\varepsilon\sqrt{\frac{\rho+\bar\rho}{2}}\right) < (1-\bar\varepsilon)(1-\delta_n)\,.
$$
If parameters are such that $\erf\left(\varepsilon\sqrt{\frac{\rho+\bar\rho}{2}}\right) < (1-\bar\varepsilon)$, the above inequality holds for large $n$ for  any vanishing sequence $\{\delta_n\}_{n=1}^{\infty}$. This tells us that in fairly general circumstances, purely isolated society cannot achieve $\delta$-perfect learning.

\begin{examp}[\textsc{Complete Graph}] \label{complete}
When the communication network $G_n$ is a complete graph, and the benefit of getting $n-1$ new signals justifies the discount of one communication step, 
$k_i^{n,\sigma^*}=n$ for every agent $i$.
\end{examp}
In Example \ref{complete}, we have $$\erf\left(\varepsilon\sqrt{\frac{\rho+\bar\rho n}{2}}\right)\geqslant 1-\delta_n\bar\varepsilon\,, \quad\forall n\in\mathbb{N}\,,$$
as a sufficient condition for $\delta$-perfect learning, which translates to

%
\begin{equation} \label{nlower}
\delta_n \geqslant \frac{1}{\bar\varepsilon}\left(1-\erf\left(\varepsilon \sqrt{\frac{\rho + \bar\rho n}{2}}\right)\right)\,.
\end{equation}
The sequence of the right hand sides of inequality (\ref{nlower}) can serve as the learning rate.
At the cost of getting a conservative estimate, we approximate the error function in order to get a more transparent learning rate.
Note that the error function {\em erf} can be approximated by
$$
1-\erf(x) < \frac{1}{\sqrt{2\pi}}\frac{1}{x} e^{-x^2/2}\,.
$$
Therefore a sufficient condition for $\delta$-perfect learning is
$$
\delta_n\geqslant \frac{1}{\sqrt{\pi}\bar\varepsilon} \frac{1}{\varepsilon\sqrt{\rho+\bar\rho n}}\exp\left(-\frac{\varepsilon^2(\rho+\bar\rho n)}{4}\right)\,.
$$
Keep other parameters fixed, and focus on the relations between population size  $n$ and $\delta_n$. We see that $\delta_n$ could decrease in the order of $\exp\left(- \bar{\rho} \varepsilon^2 n/5 \right)$.
This implies that when population grows, the probability that at least $\bar\varepsilon$ fraction of people make the wrong decision decreases very quickly to zero. 

Following the idea of error function approximations, we go beyond Example \ref{complete} to consider a more general case in which
$k_i^{n,\sigma^*}\geqslant f(n)$ for every agent $i$ where $f(n)$ is a deterministic sequence.  A sufficient condition for $\delta$-perfect learning is then
\begin{equation} \label{minilower}
\delta_n\geqslant \frac{1}{\sqrt{\pi}\bar\varepsilon} \frac{1}{\varepsilon\sqrt{\rho+\bar\rho f(n)}}\exp\left(-\frac{\varepsilon^2(\rho+\bar\rho f(n))}{4}\right)\,.
\end{equation}
If $f(n)$ diverges to infinity as $n$ goes to infinity, the right hand side of inequality (\ref{minilower}) converges to $0$. Keeping other parameters fixed, this implies $\delta_n$ could decrease in the order of  $\exp(- \bar{\rho} \varepsilon^2 f(n)/5 )$.
Formally, we summarize these discussions with the next proposition.

\begin{prop} \label{fastestrate}
Suppose there exists a diverging sequence $f(n)$ such that $k_i^{n,\sigma^*}\geqslant f(n)$ for any agent $i$ in network $G_n$ with associated equilibrium $\sigma^{n,*}$, then
$\delta$-perfect learning could occur with learning rate $\{\delta_n\}_{n=1}^\infty$, where each $\delta_n$ is in the order of $\exp(- \bar{\rho} \varepsilon^2 f(n)/5 )$.
\end{prop}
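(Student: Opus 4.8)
The plan is to chain together the sufficient condition for finite population learning from Proposition~\ref{fini}(b) with the monotonicity of $\erf$ and the Gaussian tail bound already displayed in the discussion of Example~\ref{complete}. First I would observe that, because $\erf$ is strictly increasing and the hypothesis supplies $k_i^{n,\sigma^*}\geqslant f(n)$ uniformly over all agents $i$, every summand in the criterion \eqref{finisuff} is bounded below by the common value obtained from $f(n)$, whence
$$
\frac{1}{n}\sum_{i=1}^n \erf\!\left(\varepsilon\sqrt{\frac{\rho+\bar\rho k_i^{n,\sigma^*}}{2}}\right)\;\geqslant\;\erf\!\left(\varepsilon\sqrt{\frac{\rho+\bar\rho f(n)}{2}}\right).
$$
Consequently, by the sufficient direction \eqref{finisuff} of Proposition~\ref{fini}, the network $G_n$ achieves $(\varepsilon,\bar\varepsilon,\delta_n)$-learning as soon as $\erf(\varepsilon\sqrt{(\rho+\bar\rho f(n))/2})\geqslant 1-\bar\varepsilon\delta_n$, i.e.\ as soon as
$$
\delta_n\;\geqslant\;\frac{1}{\bar\varepsilon}\left(1-\erf\!\left(\varepsilon\sqrt{\frac{\rho+\bar\rho f(n)}{2}}\right)\right).
$$

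Second, I would make this threshold explicit by invoking the tail estimate $1-\erf(x)<\tfrac{1}{\sqrt{2\pi}}\,x^{-1}e^{-x^2/2}$ at $x=\varepsilon\sqrt{(\rho+\bar\rho f(n))/2}$, exactly as in the complete-graph computation. Substituting $x^2/2=\varepsilon^2(\rho+\bar\rho f(n))/4$ and collecting the constants yields the admissible sequence
$$
\delta_n\;=\;\frac{1}{\sqrt{\pi}\,\bar\varepsilon}\,\frac{1}{\varepsilon\sqrt{\rho+\bar\rho f(n)}}\,\exp\!\left(-\frac{\varepsilon^2(\rho+\bar\rho f(n))}{4}\right).
$$
Since $f(n)\to\infty$, this $\delta_n$ is eventually below one and tends to zero, so it is a legitimate vanishing tolerance sequence witnessing $\delta$-perfect learning along $\{G_n\}_{n=1}^\infty$ under the associated equilibria.

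Third and finally, I would convert this expression into the clean order stated in the proposition. The dominant factor is the exponential $\exp(-\varepsilon^2\bar\rho f(n)/4)$ (the additive $\rho$ contributes only a constant factor $e^{-\varepsilon^2\rho/4}$), while the algebraic prefactor $(\rho+\bar\rho f(n))^{-1/2}$ decays merely polynomially in $f(n)$; absorbing it into the exponent by relaxing the constant from $1/4$ to $1/5$ gives $\delta_n=O\!\big(\exp(-\bar\rho\varepsilon^2 f(n)/5)\big)$, as claimed. The step I expect to require the most care is precisely this last absorption: I must verify that $(\rho+\bar\rho f(n))^{-1/2}=o\!\big(\exp(\varepsilon^2\bar\rho f(n)/20)\big)$ so that the polynomial term is harmlessly dominated for all large $n$, and I must confirm that $x$ is large enough for the Mills-type tail bound to apply, which is guaranteed because $f(n)\to\infty$. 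Everything else is monotonicity of $\erf$ together with the reuse of the already-established sufficient condition; the word ``could'' in the statement simply records that \eqref{finisuff} is sufficient rather than necessary, so the sequence exhibited is an achievable envelope for the rate rather than a provably sharp one.
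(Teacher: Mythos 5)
Your proposal is correct and follows essentially the same route as the paper: the paper likewise lower-bounds the average in \eqref{finisuff} by $\erf\bigl(\varepsilon\sqrt{(\rho+\bar\rho f(n))/2}\bigr)$ using $k_i^{n,\sigma^*}\geqslant f(n)$, invokes the same tail estimate $1-\erf(x)<\tfrac{1}{\sqrt{2\pi}}x^{-1}e^{-x^2/2}$ to obtain the explicit admissible sequence in \eqref{minilower}, and then relaxes the constant from $1/4$ to $1/5$ to state the clean exponential order. Your extra care in the final absorption step is in fact more than needed, since the prefactor $(\rho+\bar\rho f(n))^{-1/2}$ is decreasing and hence bounded, but this does not change the argument.
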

The next example is a direct application of Proposition \ref{fastestrate}.

\begin{examp}
Suppose $f(n)=C\cdot n$ where $0<C<1$, then $\delta$-perfect learning could occur with learning rate $\{\delta_n\}_{n=1}^\infty$, where each $\delta_n$ is in the order of $\exp(- \bar{\rho} \varepsilon^2 C n /5 )$.
\end{examp}
An interpretation of this example is that, even if communication is sparse in the sense that each of the agents can only get a small proportion of information in the entire population, perfect learning can still be reached at a fast rate.   This example represents a scenario in which communication networks in a society consist of dispersed social groups while agents within these social groups are very closely connected.  This is related to interesting results pertaining to social cliques or homophily as discussed in \cite{GJ2012b, GJ2012a, GJ2011}.

In most models, however, there is no universal bound for $k_i^{n,\sigma^*}$.  Lemma \ref{LEM: perfect1} in the supplementary materials renders Proposition \ref{fastestrate} as a special case, but it still does not cover cases when direct conservative estimate for $k_i^{n,\sigma^*}$ is not feasible.  Next we consider such a case: the binomial tree, which is widely considered as an axiomatic representation of various hierarchical structures in the human society \cite{J2010}.  
In particular, as the information flow within a binomial tree can be either from the root to the leafs or from the leafs to the root, binomial trees can accommodate both the top-down and the bottom-up cases of information transmission in various real-world scenarios.  Hence, it is instructive to analyze the binomial tree with a few different settings, where we generalize our game by allowing the information sensitiveness $\psi = \psi_n$ to vary along the society $\{G_n\}_{n=1}^{\infty}$.

\begin{examp}[\textsc{Binomial Tree: Information Flow from Root to Leafs}] \label{binomial}
The agents in the communication network $G_n$ form a binomial tree, where information can only flow from root to leafs. For simplicity, consider only the number of agents $n$ such that $n = 1 + 2 + 4+ ...+ 2^{(m_n-1)}$, where $m_n$ is the number of layers in the binomial tree. The following graph illustrates such a binomial tree with three layers.
\end{examp}
\begin{center}
\begin{tikzpicture}[->,>=stealth',shorten >=1pt,auto,node distance=2cm,
  thick,main node/.style={circle,fill=white!15,draw,font=\large\bfseries}]

  \node (1) at (3,3) [shape=circle, draw] {1};
  \node (2) at (1,1.5) [shape=circle, draw] {2};
  \node (3) at (5,1.5) [shape=circle, draw] {3};
  \node (4) at (0,0) [shape=circle, draw] {4};
  \node (5) at (2,0) [shape=circle, draw] {5};
  \node (6) at (4,0) [shape=circle, draw] {6};
  \node (7) at (6,0) [shape=circle, draw] {7};

  \path[every node/.style={font=\small}]
    (1) edge [left] (2)
    (1) edge [right] (3)
    (2) edge [right] (4)
    (2) edge [right] (5)
    (3) edge [right] (6)
    (3) edge [right] (7);
\end{tikzpicture}
\end{center}
We will study two scenarios of this binomial tree, in both of which $\lambda = r$ so that $\bar{r} = 1/2$.

\begin{enumerate}
\item[i)] $\psi_n = \rho = \bar\rho = 1$.  For agent 1 on the top layer, he should exit right after step 0 because he does not have any chance to receive others' private information.  For agent 2 and 3, who are on the second top layer, they decide between step 0 and 1.  A simple calculation on their pay off functions reveals that they should exit after step 0. Agents 4, 5, 6, 7 who are on the third layer potentially should decide between 0,1 and 2 steps.  But since agents $2$ and $3$ cannot not pass through agent $1$'s info, step 2 is eliminated before any calculation.  So agents on the third layer actually faces same decision problems as agents on the second layer. Continue with the same argument till the $m_n$'th layer, we learn that everyone in the communication network exits right after she gets the private signal.  Therefore, this scenario is the same as \textit{isolated points} in terms of information aggregation.
\end{enumerate}
In general, as depicted in this subcase i), when the communication game is less information sensitive, the precision of the prior is higher, or the precision of the private signal is lower, it is less likely to achieve $\delta$-perfect learning, even if the agents are well connected.

\begin{enumerate}
\item[ii)] $\psi_n < \frac{2}{\rho + (m_n-1)\bar\rho} - \frac{1}{\rho + m_n \bar\rho} \text{ and } \varepsilon^2 < - \frac{4}{\bar\rho}\log \left(\frac{1}{2}\sqrt{\frac{\rho + 2\bar\rho}{\rho + \bar\rho}}\right)$.  Same as subcase i), agent 1 does not have a choice. For agents on the second layer to choose exit at step 1, we need $\psi_n < \frac{2}{\rho+ \bar\rho} - \frac{1}{\rho +  2\bar\rho}$.  For agents on the third layer to exit at step 2, we need
    $$
    \psi_n < \min\left\{\frac{2}{\rho + \bar\rho}-\frac{1}{\rho+ 2\bar\rho}, \frac{2}{\rho + 2\bar\rho}-\frac{1}{\rho+ 3\bar\rho}\right\} = \frac{2}{\rho + 2\bar\rho}-\frac{1}{\rho+ 3\bar\rho}\,.
    $$
    In general, an agent on layer $j$ wait till the $j-1$ step if
    $$
    \psi_n < \min \left\{ g(1), \ldots, g(j-1)  \right\} =  g(j-1)\,.
    $$
    where $g(x)= \frac{2}{\rho + x\bar\rho}- \frac{1}{\rho + (x+1)\bar\rho}$. The last equality holds because $g(x)$  is a decreasing function, thanks to $g'(x)< 0$.
    Hence under equilibrium, agents on layer $j$ have $j$ signals.  In particular, agents in the last layer each has $m_n = \log_2(n+1)$ signals. Note that there are $\frac{n+1}{2}$ agents in this layer.
    Using (3.2), a similar derivation to that in Example \ref{complete} leads to that the learning rate $\delta_n$ should be
    $$
    \delta_n\geqslant \frac{1}{n\varepsilon\bar\varepsilon\sqrt{\pi}}\sum_{i=1}^{\log_2(n+1)}2^{j-1}\frac{1}{\sqrt{\rho + \bar\rho j}}\exp\left(-\frac{\varepsilon^2(\rho+\bar\rho j)}{4}\right)\,.
    $$
    To unravel right hand side of the above inequality, we let
    $$
    h(x) = 2^{x-1} \frac{1}{\sqrt{\rho + \bar\rho x}}\exp\left(-\frac{\varepsilon^2(\rho + \bar\rho x)}{4}\right)\,.
    $$
 Then $h(x)$ is monotone increasing, because  $h(x+1)/h(x)>1$ under our condition.  Therefore,
    it is sufficient to have
   $$
    \delta_n \geqslant \frac{1}{n\varepsilon\bar\varepsilon \sqrt{\pi}}\log_2(n+1) \cdot 2^{\log_2(n+1)-1}\frac{1}{\sqrt{\rho + \bar\rho \log_2(n+1)}}\exp\left(-\frac{\varepsilon^2(\rho + \bar\rho \log_2(n+1))}{4}\right)\,.
    $$
    Therefore $\delta_n$ could decay in the order of $\sqrt{\log(n+1)}\cdot (n+1)^{-\varepsilon^2\bar\rho/4}$, which is a polynomial rate.
\end{enumerate}


Compared to the complete graph, the binomial tree aggregates information much slower.  The difference in learning rates arises not only from the physical network structures, but also from different strategic interactions among agents in the two environments.
Next, we consider a twin case of the binomial tree, in which information flows in the opposite direction.

\begin{examp}[\textsc{Binomial Tree: Information Flow from Leafs to Root}] \label{binomialReverse}
Now let information flow from leafs to root, i.e., reverse all the directed edges in Example \ref{binomial}. The following graph illustrates such a binomial tree with three layers.
\end{examp}
\begin{center}
\begin{tikzpicture}[->,>=stealth',shorten >=1pt,auto,node distance=2cm,
  thick,main node/.style={circle,fill=white!15,draw,font=\large\bfseries}]

  \node (1) at (3,3) [shape=circle, draw] {1};
  \node (2) at (1,1.5) [shape=circle, draw] {2};
  \node (3) at (5,1.5) [shape=circle, draw] {3};
  \node (4) at (0,0) [shape=circle, draw] {4};
  \node (5) at (2,0) [shape=circle, draw] {5};
  \node (6) at (4,0) [shape=circle, draw] {6};
  \node (7) at (6,0) [shape=circle, draw] {7};

  \path[every node/.style={font=\small}]
    (2) edge [left] (1)
    (3) edge [right] (1)
    (4) edge [right] (2)
    (5) edge [right] (2)
    (6) edge [right] (3)
    (7) edge [right] (3);
\end{tikzpicture}
\end{center}
We give the following results.  The detailed analysis is similar to Example \ref{binomial}.
\begin{itemize}
\item[i)] $\psi_n = \rho = \bar\rho = 1$. All agents exit after time $0$.
\item[ii)] $\psi_n < \frac{2}{\rho + 2^{(m_n -1)}\bar\rho} - \frac{1}{\rho + 2^{m_n}\bar\rho}$. All agents get the maximum number of signals that they could possibly get, then $\delta_n$ can be such that
$$
\delta_n \geqslant \frac{1}{n\varepsilon\bar\epsilon \sqrt{\pi}}\sum_{j=1}^{\log_2(n+1)} 2^{j-1} \frac{1}{\sqrt{\rho +2^{(m_n-j+1)}\bar\rho }}\exp\left(-\frac{\varepsilon^2(\rho +  2^{(m_n - j +1)}\bar\rho)}{4} \right)\,.
$$
A conservative estimate on the summation on the right hand side would give $\delta_n \sim n^{-3/4}$, a much faster rate than that in Example \ref{binomial} when $\varepsilon^2\bar\rho \ll 3$, which can be considered as a typical case as we have in mind very small $\varepsilon$.
\end{itemize}

Note that different information flow directions matter for learning rates. When parameters are in comparable range,  the bottom-up case exhibits a higher learning rate than the top-down case does.  In other words, the bottom-up organization of information flow within a binomial tree is more efficient in terms of aggregating information.  This result is consistent with early economics and sociology literature; \cite{H45} for example, highlight the importance of dispersed information sources.


The four sub-cases under binomial tree setting demonstrate that beyond a directed graphical model, contextual information is also very important for information aggregation.  These comparisons are made possible only with help of our concept of finite population learning and $\delta$-perfect learning.  Properties or statistics of the graph alone cannot determine the learning status.  Rather, as we argued from the very beginning of this work, the enriched game theory plus graphical modeling approach are both interesting and necessary in helping understand the information aggregation in social networks.

\section{Remarks and Further Research}

We have proposed a finite population learning concept to capture the level of information aggregation in any given communication network. In our framework, one equilibrium outcome, i.e., the number of signals obtained by an agent when she makes a decision, plays a key role. This equilibrium outcome is computable (\cite{MM1996}), which also allows us to numerically check the learning status. Different from existing literature that mainly addresses the learning behavior at the limit, this new concept helps reveal explicit interplays among time discount, frequency of communication, information precision and information sensitiveness of the decision problem in any finite communication network. As the total amount of information is fixed in a given finite network, our approach enables meaningful comparative statics regarding the effectiveness of information aggregation in networks.  We also provide conditions for learning under a particular equilibrium, under any equilibrium, and under all equilibria, respectively. Thanks to its tractability and transparency, the finite population learning concept offers a solid foundation to investigate long run dynamics of learning behavior and the associated learning rates as population diverges.

Our analysis is also subject to certain limitations, which suggest directions for future research. In our model, complete information on the structure of a given communication network is required in determining the number of signals obtained by an agent, and in checking its corresponding learning status. In some circumstances, researchers do not want to assume such specific information; rather they want to get some understanding of the learning status regarding a large class of networks. This goal calls for some new criteria that can determine the learning status for given classes of networks with given finite population; ideally, these criteria should be formulated in terms of some summary statistics of these networks.  Relaxing the knowledge on specific network structure may also lead to more general results about the learning rates. However, this task is challenging within the current finite population learning framework.  Specifically, our established conditions for finite population learning require all agents' exact numbers of signals upon their exits. Only knowing some commonly used summary statistics of the associated graphs can hardly help offer good estimates of these numbers of signals, mostly because these numbers of signals are also affected by other parameters not directly related to the network structure, such as the information precisions and the tolerances of learning. Therefore, the learning status of a certain class of communication networks is largely undetermined if we just consider properties of the graphs. To address this issue, we would like to have novel statistical properties of communication networks that are more friendly to the analysis of communication learning.  \cite{GJ2012b, GJ2012a, GJ2011} are promising attempts towards this direction, but their results are limited to the context of non-Bayesian observational learning.

Another line of generalization is to make our model more flexible and realistic.   For example, the current setting assumes that agents have private signals with the same precision, which amounts to assuming that the total amount of information grows linearly with the population size when we consider the perfect learning.  It might be interesting to allow the total information to increase in a nonlinear (e.g., log rate) fashion with the population size, and allow non-uniform distribution of signal precisions among agents.  Also, even when we focus on certain classes of networks without specifying complete network structure, it is still assumed that any agent in the communication network knows the complete network structure. This assumption can be relaxed by limiting agents' knowledge on the network to a certain local neighborhood, and infer other parts of the network according to her local knowledge.     Keeping the Bayesian learning paradigm, other potential generalizations of our current work include considering the implications of correlated private information among agents, and heterogeneous characteristics of agents.

\newpage

%
%
%
%
%
%
\newpage
\setcounter{page}{1}
\section*{Supplementary Materials}
\setcounter{equation}{0}

The supplementary materials provide proofs and generalized results of corresponding parts in the main text.

\begin{proof}[\textsc{Proof of Proposition \ref{fini}.}]
To prevent $(\varepsilon, \bar\varepsilon, \delta)$-learning, it is enough to show that a lower bound of $\mathbb{P}_{\sigma^{n,*}}\left(\frac{1}{n}\sum_{i=1}^n \left(1-M_i^{n,\varepsilon}\right)\geqslant \bar\varepsilon\right)$ is greater than $\delta$.
It follows from Markov inequality,
\begin{equation*}
\mathbb{P}_{\sigma^{n,*}}\left(\frac{1}{n}\sum_{i=1}^n M_i^{n,\varepsilon} > 1-\bar\varepsilon\right)
\leqslant 
n^{-1}(1-\bar\varepsilon)^{-1}\sum_{i=1}^n \mathbb{E}_{\sigma^{n,*}}M_i^{n,\varepsilon}
=n^{-1}(1-\bar\varepsilon)^{-1}\sum_{i=1}^n \erf\left(\varepsilon\sqrt{\frac{\rho+\bar\rho k_i^{n,\sigma^*}}{2}}\right)\,.
\end{equation*}
This implies that
\begin{equation*}
\mathbb{P}_{\sigma^{n,*}}\left(\frac{1}{n}\sum_{i=1}^n\left(1-M_i^{n,\varepsilon}
\right)\geqslant\bar\varepsilon\right) 
>1-n^{-1}(1-\bar\varepsilon)^{-1}\sum_{i=1}^n \erf\left(\varepsilon\sqrt{\frac{\rho+\bar\rho k_i^{n,\sigma^*}}{2}}\right)\,.
\end{equation*}
Therefore, it is enough to take
\begin{equation*}
1-n^{-1}(1-\bar\varepsilon)^{-1}\sum_{i=1}^n \erf\left(\varepsilon\sqrt{\frac{\rho+\bar\rho k_i^{n,\sigma^*}}{2}}\right) > \delta\,,
\end{equation*}
which concludes that condition (\ref{fininece}) is necessary for $(\varepsilon, \bar\varepsilon, \delta)$-learning.

To ensure $(\varepsilon, \bar\varepsilon, \delta)$-learning, note that
\begin{equation*}
\mathbb{P}_{\sigma^{n,*}}\left(\frac{1}{n}\sum_{i=1}^n \left(1-M_i^{n,\varepsilon}\right)\geqslant \bar\varepsilon\right)\leqslant \frac{\mathbb{E}_{\sigma^{n,*}}\left(\sum_{i=1}^n(1-M_i^{n,\varepsilon})\right)}{n\bar\varepsilon}= \frac{n-\sum_{i=1}^n \erf\left(\varepsilon\sqrt{\frac{\rho+\bar\rho  k_i^{n,\sigma^*}}{2}}\right)}{n\bar\varepsilon}\,.
\end{equation*}
Demanding the right hand side of the above inequality no larger than $\delta$,  is the same as assuming condition (\ref{finisuff}).  This completes the proof.
\end{proof}

The following provides a more general sufficient condition for $\delta$-perfect learning. Given equilibria $\{\sigma^{n,*}\}_{n=1}^{\infty}$,
let $f_1\geqslant f_2\geqslant\ldots\geqslant f_J$, where each $f_j(n)$ is a monotone increasing function (not necessarily strictly increasing) on $n$, and let $\{b_n^j$, $j=1, \ldots, J\}$ be such that
$$
\frac{|\{i:k_i^{n,\sigma^*}\geqslant f_1(n)\}|}{n}\geqslant b_n^1\,,
$$
$$
\frac{|\{i:f_1(n)> k_i^{n,\sigma^*}\geqslant f_2(n)\}|}{n}\geqslant b_n^2\,,
$$
and up until
$$
\frac{|\{i:f_{J-1}(n)> k_i^{n,\sigma^*}\geqslant f_J(n)\}|}{n}\geqslant b_n^J\,.
$$
Clearly, $b_n^1, \ldots, b_n^J\in(0,1)$ and $0\leqslant b_n^1+\ldots+ b_n^J\leqslant 1$.   The rest agents $i$'s are such that $f_J(n) > k_i^{n,\sigma^*}\geqslant 1$.  Their fraction is at most $1-(b_n^1+\ldots+b_n^J)$.

\begin{lem}\label{LEM: perfect1}
$\delta$-perfect learning occurs if
\begin{itemize}
\item[(a)] $\lim_{n\to\infty}\sum_{j=1}^J b_n^j=1$,
\item[(b)] for each $j\in\{1\ldots, J\}$,
$
\lim_{n\to\infty}b_n^j\left(1-\erf\left(\varepsilon\sqrt{ \frac{\rho + \bar\rho f_j(n)}{2}}\right)\right)=0\,.
$
\end{itemize}
\end{lem}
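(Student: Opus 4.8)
The plan is to reduce the statement to the vanishing of a single empirical average, and then to control that average threshold by threshold, the delicate point being that the decomposition naturally produces the true band fractions rather than the lower bounds $b_n^j$ appearing in hypothesis (b). Throughout write $k_i := k_i^{n,\sigma^*}$ and $g(k):=\erf(\varepsilon\sqrt{(\rho+\bar\rho k)/2})$, which is increasing in $k$ and takes values in $[0,1]$. First I would recall from the proof of Proposition \ref{fini}(b) that, under the equilibrium $\sigma^{n,*}$, one has $\mathbb{P}_{\sigma^{n,*}}\!\left(\frac1n\sum_{i=1}^n(1-M_i^{n,\varepsilon})\geqslant\bar\varepsilon\right)\le \frac{1}{\bar\varepsilon}\cdot\frac1n\sum_{i=1}^n\bigl(1-g(k_i)\bigr)$. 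Hence, setting $\delta_n:=\frac{1}{\bar\varepsilon}\cdot\frac1n\sum_{i=1}^n(1-g(k_i))$ makes $(\varepsilon,\bar\varepsilon,\delta_n)$-learning hold for every $n$, and the whole lemma collapses to showing that $E_n:=\frac1n\sum_{i=1}^n(1-g(k_i))\to 0$ (a positive vanishing sequence is then obtained, e.g., by $\delta_n+1/n$ if one insists on strict positivity).

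Second, I would decompose $E_n$ along the thresholds. Put $A_j^n:=\{i:k_i\ge f_j(n)\}$, so that $A_1^n\subseteq\cdots\subseteq A_J^n$ since $f_1\ge\cdots\ge f_J$. Assigning each agent to the highest threshold its signal count exceeds and using monotonicity of $g$, one obtains $E_n\le\sum_{j=1}^J\beta_n^j c_j+r_n$, where $c_j:=1-g(f_j(n))$, the band fraction is $\beta_n^j:=|A_j^n\setminus A_{j-1}^n|/n$ (with $A_0^n=\emptyset$), and $r_n:=|\{i:k_i<f_J(n)\}|/n$ is the residual. Two structural facts are worth recording: $c_1\le\cdots\le c_J$ (again from $f_1\ge\cdots\ge f_J$ and monotonicity of $g$), and $|A_j^n|/n=\sum_{l\le j}\beta_n^l\ge\sum_{l\le j}b_n^l$. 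The residual is disposed of immediately by hypothesis (a): $r_n=1-|A_J^n|/n\le 1-\sum_{j=1}^J b_n^j\to 0$.

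The hard part will be the main term, because the decomposition carries the actual fractions $\beta_n^j$, which could a priori far exceed $b_n^j$, so (b) does not bound $\beta_n^j c_j$ term by term. I would resolve this by a subsequence argument that leverages the nesting of the $A_j^n$ together with $c_1\le\cdots\le c_J$. Suppose $\beta_n^j c_j\not\to 0$ for some fixed $j$; pass to a subsequence on which $\beta_n^j c_j\ge\eta>0$, whence $c_j\ge\eta$ and $\beta_n^j\ge\eta$ there. Since $c_l\ge c_j\ge\eta$ for every $l\ge j$, hypothesis (b) gives $b_n^l=(b_n^l c_l)/c_l\le(b_n^l c_l)/\eta\to 0$ for each such $l$, so $\sum_{l\ge j}b_n^l\to 0$ along the subsequence; combined with (a) this yields $\sum_{l<j}b_n^l\to 1$, hence $|A_{j-1}^n|/n\ge\sum_{l\le j-1}b_n^l\to 1$, and therefore $\beta_n^j\le 1-|A_{j-1}^n|/n\to 0$, contradicting $\beta_n^j\ge\eta$ (for $j=1$ the same reasoning gives $\sum_l b_n^l\to 0$, contradicting (a) directly). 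Thus $\beta_n^j c_j\to 0$ for every $j$, and as $J$ is finite we conclude $E_n\le\sum_{j=1}^J\beta_n^j c_j+r_n\to 0$, so $\delta_n\to 0$, which completes the argument.
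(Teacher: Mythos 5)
Your proof is correct, and it departs from the paper's exactly at the step you flag as delicate. Both arguments share the same skeleton: by the Markov bound from the proof of Proposition \ref{fini}(b), it suffices to exhibit a vanishing sequence dominating $E_n=\frac1n\sum_{i=1}^n\bigl(1-\erf\bigl(\varepsilon\sqrt{(\rho+\bar\rho k_i^{n,\sigma^*})/2}\bigr)\bigr)$, and both stratify agents by the thresholds $f_j(n)$. The difference is how the mismatch between the true band fractions $\beta_n^j$ and the guaranteed lower bounds $b_n^j$ is absorbed. The paper does it by a one-line redistribution: since $\beta_n^j\geqslant b_n^j$ while every band value $\erf\bigl(\varepsilon\sqrt{(\rho+\bar\rho f_j(n))/2}\bigr)$ is at least the worst-case value $\erf\bigl(\varepsilon\sqrt{(\rho+\bar\rho)/2}\bigr)$, the excess mass $\beta_n^j-b_n^j$ in each band may be demoted to the worst-case value, giving
\begin{equation*}
\frac1n\sum_{i=1}^n \erf\left(\varepsilon\sqrt{\tfrac{\rho+\bar\rho k_i^{n,\sigma^*}}{2}}\right)\;\geqslant\;\sum_{j=1}^J b_n^j\,\erf\left(\varepsilon\sqrt{\tfrac{\rho+\bar\rho f_j(n)}{2}}\right)+\Bigl(1-\sum_{j=1}^J b_n^j\Bigr)\erf\left(\varepsilon\sqrt{\tfrac{\rho+\bar\rho}{2}}\right)\,,
\end{equation*}
so that $\delta_n$ can be taken as $\frac{1}{\bar\varepsilon}$ times one minus the right-hand side, and hypotheses (a) and (b) then make this $\delta_n$ vanish term by term with no further work (the paper asserts this inequality only ``by the definition of $b_n^j$ and $f_j$''; the redistribution computation is what justifies it). You instead keep the actual fractions $\beta_n^j$ and prove $\beta_n^j c_j\to0$ for each $j$ by a subsequence/contradiction argument exploiting the nesting $A_1^n\subseteq\cdots\subseteq A_J^n$ together with $c_1\leqslant\cdots\leqslant c_J$; this is more intricate, but it is fully rigorous precisely where the paper is terse, and it yields a slightly sharper rate, since your $\delta_n=E_n/\bar\varepsilon$ is dominated by the paper's $b$-based bound. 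What the paper's choice buys in exchange is a rate expressed solely in terms of the prescribed data $(b_n^j,f_j(n))$ --- which is what makes learning rates explicitly computable downstream, as in Proposition \ref{fastestrate} --- whereas evaluating your $E_n$ requires knowledge of the whole equilibrium profile $\{k_i^{n,\sigma^*}\}_{i=1}^n$.
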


\begin{proof}[\textsc{Proof of Lemma \ref{LEM: perfect1}.}]
Recall that a sufficient condition for $(\varepsilon, \bar\varepsilon, \delta_n)$-learning is
$$
\frac{1}{n}\sum_{i=1}^n \erf\left(\varepsilon\sqrt{ \frac{\rho +\bar\rho k_i^{n,\sigma^*}}{2}}\right)\geqslant 1-\delta_n\varepsilon\,.
$$
Then by the definition of $b_n^j$ and $f_j$, it is enough to have
$$
\sum_{j=1}^J \erf\left(\varepsilon \sqrt{\frac{\rho+\bar\rho  f_j(n)}{2}}\right)\cdot b_n^j+ \left(1-\sum_{j=1}^J b^j_n\right)\erf\left(\varepsilon\sqrt{\frac{\rho+\bar\rho}{2}}\right)\geqslant 1-\delta_n\varepsilon\,.
$$
This translates to
$$
\delta_n\geqslant \frac{1}{\bar\varepsilon}\left(\sum_{j=1}^J b_n^j\left(1-\erf\left(\varepsilon\sqrt{\frac{\rho+\bar\rho  f_j(n)}{2}}\right)\right)+\left(1-\sum_{j=1}^J b_n^j\right)\left(1-\erf\left(\varepsilon\sqrt{\frac{\rho+\bar\rho}{2}}\right)\right)\right)\,.
$$
To ensure the existence of $\{\delta_n\}$ such that $\lim_{n\to\infty}\delta_n= 0$, it is enough to have $\lim_{n\to\infty}\sum_{j=1}^J b_n^{j} = 1$ and
$$
\lim_{n\to\infty}b_n^j\left(1-\erf\left(\varepsilon\sqrt{ \frac{\rho + \bar\rho f_j(n)}{2}}\right)\right)= 0, \quad \mbox{ for } j \leq J.
$$

\end{proof}

Note that if $f_j$ does not increase strictly for $n\geqslant N^*$, $b_n^j$ needs to decrease to $0$.
Also, allowing more than one tolerances among $\varepsilon$, $\bar\varepsilon$, $\delta$ to vary with population size $n$ leads to interesting learning results.  In particular,
from the proof of Lemma  \ref{LEM: perfect1},  a sufficient condition for $(\varepsilon, \bar\varepsilon_n, \delta_n)$- learning is
$$
\delta_n \bar{\varepsilon}_n\geqslant \sum_{j=1}^J b_n^j\left(1-\erf\left(\varepsilon\sqrt{\frac{\rho+\bar\rho f_j(n)}{2}}\right)\right)+\left(1-\sum_{j=1}^J b_n^j\right)\left(1-\erf\left(\varepsilon\sqrt{\frac{\rho+\bar\rho}{2}}\right)\right)\,.
$$
In this condition, the role of $\delta_n$ and that of $\bar\varepsilon_n$ are completely interchangeable, which implies that we can trade in some probabilistic confidence for some fraction of agents who make wrong decisions.

The following provides a generalized version of Lemma \ref{round short} when Assumption \ref{Assumption:regular} is relaxed.

\begin{lem}[\textsc{Generalized Lemma \ref{round short}}] \label{round}
For any agent $i$, either the communication steps she optimally experiences before taking an action in any social network $G_n$ along a society $\{G_n\}_{n=1}^\infty$ is bounded from above by a constant independent of $n$, or she waits until the maximum allowable step.  Specifically,
\begin{itemize}
\item[(a)] If $(\rho+\bar{\rho})\psi>1$, then for any agent $i$
\begin{equation*}
l_i^{n, \sigma^*}\leqslant l_i^n<\min\left\{(L_i^n)_{max}, \ln\left(1-\frac{1}{(\rho+\bar{\rho})\psi}\right)/\ln\bar{r}\right\}\,,
\end{equation*}
where $l_i^n$ stands for agent i's optimal communication rounds given that other agents wait till the maximum allowable step.
\item[(b)] If $(\rho+\bar{\rho})\psi\leqslant 0$ (equivalently, $\psi \leq 0$), then for any agent $i$
\begin{equation*}
l_i^{n, \sigma^*}=l_i^n=(L_i^n)_{max}\,.
\end{equation*}
\item[(c)] If $0<(\rho+\bar{\rho})\psi \leqslant 1$, then there are two subcases.
\begin{itemize}
\item[(c.1)] For agent $i$ with
\begin{equation*}
\lim_{n\to\infty}|B_i^n|<\frac{1-\rho\psi}{\bar{\rho}\psi}\,,
\end{equation*}
where $B_i^n$ is the set of agents whose signals agent $i$ can get if no one exits before maximum allowable step, we have
\begin{equation*}
l_i^{n, \sigma^*}=l_i^n=(L_i^n)_{max}\,.
\end{equation*}
\item[(c.2)] For agent $i$ with
\begin{equation*}
\lim_{n\to\infty}|B_i^n|\geqslant \frac{1-\rho\psi}{\bar{\rho}\psi}\,,
\end{equation*}
we have either
\begin{equation*}
l_i^{n, \sigma^*}\leqslant l_i^n\leqslant \min\left((L_i^n)_{max}, l^{\{G_n\}_{n=1}^\infty}_i\right)\,,
\end{equation*}
or
\begin{equation*}
l_i^{n, \sigma^*}=(L_i^n)_{max}\,,  
\end{equation*}
where $l^{\{G_n\}_{n=1}^\infty}_i$ is a constant that depends on the society and agent $i$'s position in the society, but does not change with $n$.
\end{itemize}
\end{itemize}
\end{lem}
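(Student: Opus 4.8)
The plan is to reduce the statement to a one-dimensional optimal-stopping analysis of the map $l\mapsto V(l):=\bar r^{\,l}\bigl(\psi-\tfrac{1}{\rho+\bar\rho\,k_i(l)}\bigr)$ over the finite strategy set $\{0,1,\dots,(L_i^n)_{max}\}$, where $k_i(l)$ is the number of signals $i$ holds after $l$ rounds. First I would record two reductions. Because the game is supermodular, best responses are monotone in the opponents' rounds, so agent $i$'s optimal round is largest when everybody else waits maximally; this already gives $l_i^{n,\sigma^*}\le l_i^n$ and lets me compute $l_i^n$ against the scenario ``others never exit,'' in which $k_i(l)=|B^n_{i,l}|$ is nondecreasing in $l$ with $k_i(0)=1$. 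Note also that $\bar r=\lambda/(\lambda+r)\in(0,1)$ and $k_i((L_i^n)_{max})=|B_i^n|$.

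Two elementary facts then drive every case. (i) Since $\psi-\tfrac{1}{\rho+\bar\rho k}\le\psi$ for all $k$, we have $V(l)\le \bar r^{\,l}\psi$; hence if some round $l_0$ yields $V(l_0)=A>0$, then any $l$ with $\bar r^{\,l}\psi\le A$ is strictly dominated by $l_0$, which caps the optimal round by $\ln(A/\psi)/\ln\bar r$. (ii) On any block of rounds where $\psi-\tfrac{1}{\rho+\bar\rho k_i(l)}<0$, the map $V$ is strictly increasing: writing $c_l:=\psi-\tfrac{1}{\rho+\bar\rho k_i(l)}<0$ and using $c_{l+1}\ge c_l$ together with $\bar r<1$ gives $\bar r\,c_{l+1}-c_l>0$, i.e.\ $V(l+1)>V(l)$. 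Thus the agent never stops strictly inside a negative-payoff stretch.

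Cases (a), (b) and (c.1) now follow directly. In (a), $(\rho+\bar\rho)\psi>1$ forces $V(0)=\psi-\tfrac{1}{\rho+\bar\rho}>0$; applying fact (i) with $A=V(0)$ and solving $\bar r^{\,l}\le 1-\tfrac{1}{(\rho+\bar\rho)\psi}$ yields $l_i^n<\ln\!\bigl(1-\tfrac{1}{(\rho+\bar\rho)\psi}\bigr)/\ln\bar r$, which I intersect with the trivial bound $(L_i^n)_{max}$. In (b), $\psi\le0$ makes $c_l<0$ for every $l$ and for \emph{any} opponents' profile, so by fact (ii) $V$ increases all the way to the top; waiting maximally is then a dominant action and $l_i^{n,\sigma^*}=l_i^n=(L_i^n)_{max}$. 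In (c.1), monotonicity of $|B_i^n|$ in $n$ (links are kept as the society grows) together with $\lim_n|B_i^n|<k^*:=\tfrac{1-\rho\psi}{\bar\rho\psi}$ gives $k_i(l)\le|B_i^n|<k^*$ for every $l$ and every $n$, so again $c_l<0$ throughout and the maximal step is optimal.

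The real work is case (c.2), where $\lim_n|B_i^n|\ge k^*$ so positive payoff is eventually attainable, and the stated dichotomy is decided by \emph{when}. If some fixed radius $\hat l$ already satisfies $\lim_n|B^n_{i,\hat l}|>k^*$, then for large $n$ one has $k_i(\hat l)>k^*$, and replacing $k_i(\hat l)$ by the least integer exceeding $k^*$ bounds $V(\hat l)$ below by a constant $v_0>0$ independent of $n$; fact (i) with $A=v_0$ then caps $l_i^n$ by the $n$-free constant $l^{\{G_n\}_{n=1}^\infty}_i:=\ln(v_0/\psi)/\ln\bar r$, placing us in the first branch via $l_i^{n,\sigma^*}\le l_i^n\le\min\{(L_i^n)_{max},\,l^{\{G_n\}_{n=1}^\infty}_i\}$. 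Otherwise $\lim_n|B^n_{i,l}|\le k^*$ for every fixed $l$, so at each bounded round the payoff is nonpositive; by fact (ii) the agent never stops there, and positive payoff is reached only as the round index grows with $n$, forcing the optimum toward the terminal round $(L_i^n)_{max}$. I expect this last step to be the main obstacle: the clean consequence of fact (i) is only that the optimal round is \emph{not} bounded by an $n$-free constant, whereas pinning it to exactly $(L_i^n)_{max}$ requires ruling out an interior maximum inside the positive-payoff stretch. I would close this by exploiting that new signals keep arriving until the last reachable distance—so $k_i$ strictly increases up to $(L_i^n)_{max}$—and combining this with fact (ii) and the supermodular monotonicity to show the terminal round dominates every earlier one in that regime.
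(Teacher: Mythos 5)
Your cases (a), (b) and (c.1) are sound and essentially reproduce the paper's own arguments: your fact (i) is the paper's comparison of $\bar r^{\,l}\psi$ against the payoff of exiting immediately, and your fact (ii) is the paper's observation that waiting only discounts a negative payoff. One small repair in (a): the cap on $l_i^{n,\sigma^*}$ should be obtained by applying fact (i) \emph{inside an arbitrary equilibrium} -- legitimate because exiting at round $0$ yields $\psi-1/(\rho+\bar\rho)>0$ regardless of what others do -- rather than through the monotonicity claim discussed next.

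The genuine gap is case (c.2), and it originates in your opening reduction: the claim that supermodularity gives $l_i^{n,\sigma^*}\le l_i^n$ for every equilibrium. This monotone-best-response property fails exactly in the regime this lemma is about, and it contradicts the statement you are proving: the second branch of (c.2), $l_i^{n,\sigma^*}=(L_i^n)_{max}$, exists precisely because an equilibrium can starve agent $i$ of information. Concretely, if her in-neighbors exit at round $0$, her signal count freezes at one plus her in-degree; if that count keeps $\psi-1/(\rho+\bar\rho k)$ negative, her best response is the maximum allowable step, even though her best response to maximally patient opponents is bounded by an $n$-free constant -- so best responses are \emph{not} monotone in opponents' waiting times. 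Hence the paper's dichotomy in (c.2) is equilibrium-driven (either agent $i$ secures a positive payoff under $\sigma^{n,*}$ and exits weakly before $l_i^n$, or she cannot and waits until $(L_i^n)_{max}$), whereas your (A)/(B) split is structure-driven and can never produce the second branch. Your case (B) is also misdiagnosed: since $|B_{i,l}^n|$ is non-decreasing in $n$, the condition $\lim_n|B_{i,l}^n|\le k^*$ for every fixed $l$ forces $|B_{i,l}^n|\le k^*$ for \emph{all} $n$ and all $l$, including $l=(L_i^n)_{max}$, so a positive payoff is never attained; the ``positive-payoff stretch reached at rounds growing with $n$'' does not exist, the obstacle you anticipate is vacuous, and fact (ii) alone gives $l_i^n=(L_i^n)_{max}$ there. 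What your proposal is missing is the separate equilibrium-level argument: the bound $l_i^n\le l^{\{G_n\}_{n=1}^\infty}_i$ holds throughout (c.2) (the paper derives it from the necessary optimality condition $|B_{i,l_i^n}^n|<(\lambda+r-\rho r\psi)/(\bar\rho r\psi)$, which is morally your fact (i)), and then one must analyze $l_i^{n,\sigma^*}$ directly under an arbitrary equilibrium profile, without any appeal to monotone best responses.
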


\begin{proof}[\textsc{Proof of Lemma \ref{round}.}]
We proceed case by case.\\
\indent \textsc{Case }(a), $(\rho+\bar{\rho})\psi>1$.

In this case, agent $i$ enjoys a positive payoff $\psi-\frac{1}{\rho+\bar{\rho}}$ if she exists at $t=0$ and does not communicate with anyone else.  Note that her expected payoff  by taking $l_i^n$ communication steps is strictly upper bounded by $\bar{r}^{l_i^n} \psi$.  Therefore, it is suboptimal for her to choose a $l_i^n$ such that
\begin{equation*}
\bar{r}^{l_i^n} \psi\leqslant \psi-\frac{1}{\rho+\bar{\rho}}\,,
\end{equation*}
which implies
\begin{equation*}
l_i^n<\ln\left(1-\frac{1}{(\rho+\bar{\rho})\psi}\right)/\ln\bar{r}
\end{equation*}
is necessary for agent $i$'s optimality.  It is obvious that $l_i^{n, \sigma^*}\leqslant l_i^n$, since other agents do not necessarily wait forever in an equilibrium, so that it may be optimal for agent $i$ to exit earlier too. We get the result by combining these with the upper bound $l_i^n\leqslant (L_i^n)_{max}$.  
\quad\\
\indent \textsc{Case }(b), $(\rho+\bar{\rho})\psi\leqslant 0$.

Now agent $i$ always gets a negative payoff whenever she exits.  Because waiting discounts the negative payoff, she optimally chooses to wait as long as possible, no matter what other agents do.  Therefore, $l_i^{n, \sigma^*}=l_i^n=(L_i^n)_{max}$.

\indent \textsc{Case }(c.1), $0<(\rho+\bar{\rho})\psi\leqslant 1$ and $\lim_{n\to\infty}|B_i^n|<\frac{1-\rho\psi}{\bar{\rho}\psi}$.

The maximum number of private signals agent $i$ can get is $|B_i^n|$. Again, agent $i$ always gets a negative payoff whenever she exists.  Hence, $l_i^{n, \sigma^*}=l_i^n=(L_i^n)_{max}\,.$

\indent \textsc{Case }(c.2), $0<(\rho+\bar{\rho})\psi\leqslant 1$ and $\lim_{n\to\infty}|B_i^n|\geqslant \frac{1-\rho\psi}{\bar{\rho}\psi}$.


For any $G_n$ with $|B_i^n|\geqslant \frac{1-\rho\psi}{\bar{\rho}\psi}$, we consider the communication step $(L_i^n)_{max}$ when agent $i$ obtains signals from all her sources $B_i^n$, provided others wait maximum steps. 
Note that $(L_i^n)_{max}$ is non-decreasing in $n$ for any agent $i$ (by the no deleting assumption), and $|B_{i,l}^n|$ is strictly monotone increasing in $l$ when $l\leqslant (L_i^n)_{max}$.

Also for a given communication network $G_n$, there exists one communication step $l_i^{n'}$ such that after this step agent $i$ gets positive payoff, given that other agents wait maximum steps.  Hence, it is suboptimal for her to wait longer than $l_i^{n'}$ if
\begin{equation*}
\bar{r} \psi \leqslant \psi - \frac{1}{\rho+\bar{\rho}|B_{i,l_i^{n'}}^n|}\,,
\end{equation*}
which implies
\begin{equation} \label{discountbound}
|B_{i,l_i^n}^n|<\frac{\lambda+r-\rho r\psi}{\bar{\rho}r\psi}
\end{equation}
is necessary for agent $l_i^n$'s optimality.

Now we consider two sub-cases.  First is when $\lim_{n\to\infty}|B_i^n|<\infty$.  Then we must have
$\lim_{n\to\infty}(L_i^n)_{max}<\infty\,,$ since $(L_i^n)_{max}\leqslant|B_i^n|$.  
Hence, 
\begin{equation*}
l_i^n\leqslant \lim_{n\to\infty}(L_i^n)_{max}<\infty\,,
\end{equation*}
for all $G_n$ satisfying $|B_i^n|\geqslant \frac{1-\rho\psi}{\bar{\rho}\psi}$ and $\lim_{n\to\infty}|B_i^n|<\infty$.  We denote $\lim_{n\to\infty}(L_i^n)_{\max}$ as $l^{\{G_n\}_{n=1}^\infty}_{1i}$, which is a constant that depends on the society and agent $i$'s position in the society and does not change with respect to $n$.


Second, we discuss the case when $\lim_{n\to\infty}|B_i^n|=\infty$.  Now there should be either $\lim_{n\to\infty}(L_i^n)_{max}<\infty$ or $\lim_{n\to\infty}(L_i^n)_{max}=\infty\,.$  In the former scenario,  we have $l_i^n\leqslant \lim_{n\to\infty}(L_i^n)_{max} = l^{\{G_n\}_{n=1}^\infty}_{1i}$ for all $G_n$.  In the latter case, as $(L_i^n)_{max}$ is non-decreasing in $n$ for any given $i$ and $|B_{i,l}^n|$ is strictly monotone increasing in $l$ when $l\leqslant (L_i^n)_{max}$ for any $G_n$, there exists a largest $G_N$ with its associated $(L_i^N)_{max}$ that satisfies condition  (\ref{discountbound}).  Hence, by  (\ref{discountbound}) we obtain
\begin{equation*}
l_i^n\leqslant (L_i^N)_{max}\,,
\end{equation*}
for all $G_n$ satisfying $|B_i^n|\geqslant \frac{1-\rho\psi}{\bar{\rho}\psi}$, $\lim_{n\to\infty}|B_i^n|=\infty$ and $\lim_{n\to\infty}(L_i^n)_{max}=\infty\,.$  We denote such $(L_i^N)_{max}$ as $l^{\{G_n\}_{n=1}^\infty}_{2i}$, which is again a constant that depends on the society and agent $i$'s position in the society and does not change with respect to $n$.
To sum up, we denote by $l^{\{G_n\}_{n=1}^\infty}_{i}$ either $l^{\{G_n\}_{n=1}^\infty}_{1i}$ or $l^{\{G_n\}_{n=1}^\infty}_{2i}$ in respective cases,  and it follows $l_i^n\leqslant l^{\{G_n\}_{n=1}^\infty}_i$ for agent $i$ in such $G_n$ with $|B_i^n|\geqslant \frac{1-\rho\psi}{\bar{\rho}\psi}$, where $l^{\{G_n\}_{n=1}^\infty}_i$ is independent of $n$.

As for $l_i^{n, \sigma^*}$, since other agents play equilibrium strategies, agent $i$ gets weakly fewer signals than that she can get when other agents wait maximum steps. There can be two cases, either she gets positive payoff and takes an action weakly earlier, namely, $l_i^{n, \sigma^*}\leqslant l_i^n$, or she cannot get enough signals to ensure a positive payoff so that she optimally until the maximum allowable step, i.e., $l_i^{n, \sigma^*}=(L_i^n)_{max}$.  This concludes the proof.
\end{proof}

\begin{proof}[\textsc{Proof of Proposition \ref{socialinformedlearn}.}]
By Lemma \ref{LEM: perfect1}, it suffices to show that $\lim_{n\to\infty}k_i^{n,\sigma^*}=\infty$ under any equilibria $\{\sigma^{n,*}\}_{n=1}^{\infty}$
for any socially informed agent $i$.  In the following, we consider a fixed socially informed agent $i$.  Recall that in Definition \ref{socialinformed}, $L_i$ is defined as the smallest positive integer such that $\lim_{n\to\infty} |B_{i,L_i}^n| = \infty\,.$ Denote by $B_{i, l}^{n,\sigma^*}$ the set of agents whose signals can reach $i$ in the first $l$ rounds of communication under equilibrium $\sigma^{n,*}$.

The problem is simple when $L_i=1$.  Clearly, $B_{i, 1}^{n,\sigma^*}=B_{i, 1}^{n}$ under any equilibrium $\sigma^{n,*}$.  As agent $i$ is socially informed, we have for sufficiently large $n$
\begin{equation*}
\psi-\frac{1}{\rho+\bar{\rho}|B_{i,1}^{n,\sigma^*}|}>0 \text{ under any } \sigma^{n,*}
\end{equation*}
and
\begin{equation*}
\bar{r}\left(\psi-\frac{1}{\rho+\bar{\rho}|B_{i,1}^{n,\sigma^*}|}\right) > \psi-\frac{1}{\rho+\bar{\rho}} \text{ under any } \sigma^{n,*}\,.
\end{equation*}
The above display implies that agent $i$ should at least wait for one communication round.  Hence, $k_i^{n,\sigma^*}\geqslant  |B_{i, 1}^{n,\sigma^*}|$ under any $\sigma^{n,*}$ for sufficiently large $n$.  As a consequence, $\lim_{n\to\infty}k_i^{n,\sigma^*}\geqslant \lim_{n\rightarrow\infty} |B_{i, 1}^{n,\sigma^*}|=\lim_{n\rightarrow\infty}|B^n_{i, 1}|=\infty$ under any $\{\sigma^{n,*}\}_{n=1}^{\infty}$.

The following discussion is on  the cases when $L_i\geqslant 2$.  We proceed through three steps.

\textsc{Step 1.} We claim when $L_i\geqslant 2$, for sufficiently large n, there exists at least one path $\{j_{L_i-1}, j_{L_i-2}, ..., j_1, i\}$ from $j_{L_i-1}$ to $i$ such that
\begin{equation} \label{informpath}
\lim_{n\to\infty}|B_{j_{L_i-l}, l}^n|=\infty \text{ for all } l\in\{1, \ldots, L_i-1\}\,.
\end{equation}
Now we construct the path $\{j_{L_i-1}, j_{L_i-2}, ..., j_1, i\}$ that satisfies condition (\ref{informpath}). Because $L_i$ is the smallest integer $j$ such that $\lim_{n\to\infty}|B^n_{i,j}|=\infty$, $B^n_{i, L_i-1} \setminus B^n_{i, L_i-2}$, the set of agents that are of distance $L_i-1$ to $i$, must be finite in the limit, i.e., $\lim_{n\to\infty}|B^n_{i, L_i-1} \setminus B^n_{i, L_i-2}|<\infty$.  Therefore, there is at least one agent $j$ of distance $L_i-1$ to $i$, such that $\lim_{n\to\infty}|B^n_{j,1}|=\infty$.
We denote one of such agents $j$ as $j_{L_i-1}$.  If $L_i=2$, the desired path has been constructed.  When $L_i\geqslant 3$, choose any path $\{j_{L_i-1}, j_{L_i-2}, ..., j_1, i\}$ from the chosen $j_{L_i-1}$ to $i$.  Clearly, $j_{L_i-l}\in B^n_{i, L_i-l}$.  Moreover, condition (\ref{informpath}) is satisfied in view of $\lim_{n\to\infty}|B^n_{j_{L_i-1},1}|=\infty$.

\textsc{Step 2.} We next argue that when $L_i\geqslant 2$, agent $j_{L_i-l}$ on the path $\{j_{L_i-1}, j_{L_i-2}, ..., j_1, i\}$ will not exit before she experiences $l$ communication steps under any equilibrium $\sigma^{n,*}$ provided that $n$ is sufficiently large.  It is worth noting that agent $j_{L_i-l}$ does not necessarily get a positive payoff when she experiences $l$ communication steps in equilibrium.

We will see this by induction from $j_{L_i-1}$ to $j_1$ sequentially.  We first show that agent $j_{L_i-1}$ will not exit before she experiences her first communication step in any equilibrium $\sigma^{n,*}$ provided that $n$ is sufficiently large.  It requires that there exists $N$ such that for all social networks $G_n\in \{G_n\}_{n=1}^\infty$ and its associated equilibrium $\sigma^{n,*}$ with $n\geqslant N$,
\begin{equation} \label{proofsi1}
\bar{r}\left(\psi-\frac{1}{\rho+\bar{\rho}|B_{j_{L_i-1}, 1}^{n,\sigma^*}|}\right) > \psi-\frac{1}{\rho+\bar{\rho}}\,.
\end{equation}

To validate condition (\ref{proofsi1}), recall condition (\ref{incentiveinform}) from Definition \ref{socialinformed} for $l=L_i-1$, which states that  there exists $N$ such that for all social networks $G_n\in \{G_n\}_{n=1}^\infty$ with $n\geqslant N$ it holds
\begin{equation} \label{proofsi2}
\bar{r}\left(\psi-\frac{1}{\rho+\bar{\rho}|B_{i,L_i}^n|}\right)> \psi-\frac{1}{\rho+\bar{\rho}|B_{i,L_i-1}^n|}\,.
\end{equation}
By the definition of $L_i$, the construction of $j_{L_i-1}$ and the fact that $B_{j_{L_i-1}, 1}^{n,\sigma^*}=B_{j_{L_i-1}, 1}^n$ under any equilibrium $\sigma^{n,*}$ with any $n$, we know that $\lim_{n\to\infty}|B_{j_{L_i-1}, 1}^{n,\sigma^*}|=\lim_{n\to\infty}|B_{j_{L_i-1}, 1}^{n}|=\infty$ under any $\sigma^{n,*}$ and $\lim_{n\to\infty}|B_{i,L_i-1}^n|<\infty$.  Also we have $|B_{i,L_i-1}^n|\geqslant 1$.  Note that the right hand side of condition (\ref{proofsi2}) is greater than or equal to the right hand side of condition (\ref{proofsi1}),  we obtain easily that (\ref{proofsi1}) holds for sufficiently large $n$.  Hence we get that agent $j_{L_i-1}$ will not exit before she experiences her first communication step under any $\sigma^{n,*}$ provided that $n$ is sufficiently large.

We then show that agent $j_{L_i-2}$ (for $L_i\geq 3$) will not exit before she experiences her second communication step under any equilibrium for sufficiently large $n$.  It requires that there exists $N$ such that for all social networks $G_n\in \{G_n\}_{n=1}^\infty$ and its associated equilibrium $\sigma^{n,*}$ with $n\geqslant N$,
\begin{equation} \label{proofsi3}
\bar{r}^2\left(\psi-\frac{1}{\rho+\bar{\rho}|B_{j_{L_i-2}, 2}^{n,\sigma^*}|}\right)> \psi-\frac{1}{\rho+\bar{\rho}}\,,
\end{equation}
and
\begin{equation} \label{proofsi4}
\bar{r}^2\left(\psi-\frac{1}{\rho+\bar{\rho}|B_{j_{L_i-2}, 2}^{n,\sigma^*}|}\right) > \bar{r}\left(\psi-\frac{1}{\rho+\bar{\rho}|B_{j_{L_i-2}, 1}^{n,\sigma^*}|}\right) \,.
\end{equation}

To validate (\ref{proofsi3}) and  (\ref{proofsi4}), we use again the condition (\ref{incentiveinform}) from Definition \ref{socialinformed} for $l=L_i-2$ and $l= L_i-1$, which state that there exists $N$ such that for all social networks $G_n\in \{G_n\}_{n=1}^\infty$ with $n\geqslant N$ we have
\begin{equation} \label{proofsi5}
\bar{r}^2\left(\psi-\frac{1}{\rho+\bar{\rho}|B_{i,L_i}^n|}\right) > \psi-\frac{1}{\rho+\bar{\rho}|B_{i,L_i-2}^n|}\,,
\end{equation}
and
\begin{equation} \label{proofsi6}
\bar{r}^2\left(\psi-\frac{1}{\rho+\bar{\rho}|B_{i,L_i}^n|}\right) > \bar{r}\left(\psi-\frac{1}{\rho+\bar{\rho}|B_{i,L_i-1}^n|}\right)\,.
\end{equation}

Similarly, by the definition of $L_i$ and the construction of $j_{L_i-1}$ and $j_{L_i-2}$, we know that $\lim_{n\to\infty}|B_{j_{L_i-2}, 2}^n|=\lim_{n\to\infty}|B_{i,L_i}^n|=\infty$,  $\lim_{n\to\infty}|B_{i,L_i-1}^n|<\infty$, $\lim_{n\to\infty}|B_{i,L_i-2}^n|<\infty$, and $\lim_{n\to\infty}|B_{j_{L_i-2},1}^{n,\sigma^*}|\leqslant\lim_{n\to\infty}|B_{j_{L_i-2},1}^n|<\infty$ under any equilibrium $\sigma^{n,*}$.  Also we have $|B_{i,L_i-2}^n|\geqslant 1$ and $B_{j_{L_i-2},1}^{n,\sigma^*}\subseteq B_{j_{L_i-2},1}^n\subseteq B_{i,L_i-1}^n$ (and thus $| B_{i,L_i-1}^n|\geqslant |B_{j_{L_i-2},1}^n|\geqslant |B_{j_{L_i-2},1}^{n,\sigma^*}|$) for any $n$ under any equilibrium $\sigma^{n,*}$.   Note that the right hand side of condition (\ref{proofsi5}) is greater than or equal to the right hand side of condition (\ref{proofsi3}), and the right hand side of condition (\ref{proofsi6}) is greater than or equal to the right hand side of condition (\ref{proofsi4}). Then it can be verified that the next two inequalities hold for sufficiently large $n$, the right hand sides of which are the same as those in conditions (\ref{proofsi3}) and (\ref{proofsi4}):
\begin{equation} \label{proofsi7}
\bar{r}^2\left(\psi-\frac{1}{\rho+\bar{\rho}|B_{j_{L_i-2}, 2}^{n}|}\right)> \psi-\frac{1}{\rho+\bar{\rho}}\,,
\end{equation}
and
\begin{equation} \label{proofsi8}
\bar{r}^2\left(\psi-\frac{1}{\rho+\bar{\rho}|B_{j_{L_i-2}, 2}^{n}|}\right) > \bar{r}\left(\psi-\frac{1}{\rho+\bar{\rho}|B_{j_{L_i-2}, 1}^{n, \sigma^*}|}\right) \,.
\end{equation}

Furthermore, recall that we have already shown that agent $j_{L_i-1}$ will not exit before she experiences her first communication step under any equilibrium $\sigma^{n,*}$ provided that $n$ is sufficiently large, which implies that $B_{j_{L_i-1}, 1}^{n, \sigma^*}\subseteq B_{j_{L_i-2}, 2}^{n, \sigma^*}$ under any $\sigma^{n,*}$ for sufficiently large $n$, and thus $\lim_{n\to\infty}|B_{j_{L_i-2}, 2}^{n, \sigma^*}|\geqslant\lim_{n\to\infty}|B_{j_{L_i-1}, 1}^{n, \sigma^*}|=\infty$ under any equilibrium $\sigma^{n,*}$.  Also we know that $\lim_{n\to\infty}|B_{j_{L_i-2},1}^{n,\sigma^*}|<\infty$.  Together with conditions (\ref{proofsi7}) and (\ref{proofsi8}), these facts validate conditions (\ref{proofsi3}) and (\ref{proofsi4}).  Hence we get that agent $j_{L_i-2}$ will not exit before she experiences her second communication step in any $\sigma^{n,*}$ provided that $n$ is sufficiently large.

The arguments above for $j_{L_i-2}$ can be extended successively to $j_{1}$.  Hence, under any equilibrium $\sigma^{n,*}$, no $j_{L_i-l}$ in the established path $\{j_{L_i-1}, j_{L_i-2}, ..., j_1, i\}$ will exit before she experiences $l$ communication steps under any equilibrium $\sigma^{n,*}$ provided that $n$ is sufficiently large.  A byproduct is that $\lim_{n\to\infty}|B_{j_{L_i-l}, l}^{n,\sigma^*}|=\infty$ under any $\sigma^{n,*}$, for $l\in\{1, 2, ..., L_i-1\}$.

\textsc{Step 3.} Finally, we argue that the socially informed agent $i$ will not exit before she experiences $L_i$ communication steps under any equilibrium $\sigma^{n,*}$ when $n$ is sufficiently large.  It requires that there exists $N\in\mathbb{N}$ such that for all social networks $G_n\in \{G_n\}_{n=1}^\infty$ with $n\geqslant N$, we have
\begin{equation} \label{proofsi9}
\psi-\frac{1}{\rho+\bar{\rho}|B_{i,L_i}^{n,\sigma^*}|}>0\,,
\end{equation}
and
\begin{equation} \label{proofsi10}
\bar{r}^{L_i}\left(\psi-\frac{1}{\rho+\bar{\rho}|B_{i,L_i}^{n,\sigma^*}|}\right) > \bar{r}^l\left(\psi-\frac{1}{\rho+\bar{\rho}|B_{i,l}^{n,\sigma^*}|}\right)\,,
\end{equation}
for all $l< L_i$.

Recall that we have already shown that agent $j_{L_i-l}$ in the constructed path will not exit before she experiences $L_i-l$ communication steps for $l\in\{1,2,...,L_i-1\}$, under any equilibrium $\sigma^{n,*}$ provided that $n$ is sufficiently large, which implies that $B_{j_{L_i-1},1}^{n,\sigma^*}\subseteq B_{j_2,L_i-2}^{n,\sigma^*}\subseteq ... \subseteq B_{j_1,L_i-1}^{n,\sigma^*}\subseteq B_{i,L_i}^{n,\sigma^*}$ under any $\sigma^{n,*}$ for sufficiently large $n$, and thus $\lim_{n\to\infty}|B_{i, L_i}^{n, \sigma^*}|\geqslant\lim_{n\to\infty}|B_{j_1, L_i-1}^{n, \sigma^*}|\geqslant ...\geqslant\lim_{n\to\infty}|B_{j_{L_i-2}, 2}^{n, \sigma^*}|\geqslant\lim_{n\to\infty}|B_{j_{L_i-1}, 1}^{n, \sigma^*}|=\infty$ under any $\sigma^{n,*}$.  Also, we have $B_{i,l}^{n,\sigma^*}\subseteq B_{i,l}^{n}$ and thus $|B_{i,l}^{n,\sigma^*}|\leqslant |B_{i,l}^{n}|$, under any $\sigma^{n,*}$ for $l\in\{1,2,...,L_i-1\}$, which implies the right hand sides of condition (\ref{incentiveinform}) are greater than or equal to th right hand sides of condition (\ref{proofsi10}), for $l\in\{1,2,...,L_i-1\}$.  Moreover, we know that $\lim_{n\to\infty}|B_{i,l}^{n,\sigma^*}|\leqslant \lim_{n\to\infty}|B_{i,l}^{n}|<\infty$ for $l\in\{1,2,...,L_i-1\}$ by the definition of $L_i$.  Together with conditions (\ref{positiveinform}) and (\ref{incentiveinform}) in Definition \ref{socialinformed}, these facts validate conditions (\ref{proofsi9}) and (\ref{proofsi10}).  Hence we get that the socially informed agent $i$ will not exit before she experiences $L_i$ communication steps and she can enjoy a positive payoff when she experiences $L_i$ communication steps,  under any $\sigma^{n,*}$ provided that $n$ is sufficiently large.  This further implies $k_{i}^{n,\sigma^*}\geqslant |B_{i, L_i}^{n,\sigma^*}|$ under any $\sigma^{n,*}$ with sufficiently large $n$, which finally leads to $\lim_{n\to\infty}|k_{i}^{n,\sigma^*}|\geqslant \lim_{n\to\infty}|B_{{i}, L_i}^n|=\lim_{n\to\infty}|B_{i,L_i}^n|=\infty$ under any $\sigma^{n,*}$ when $L_i\geqslant 2$. This concludes the proof.
\end{proof}

\end{document}